\spnewtheorem{thm}[theorem]{Theorem}{\bfseries}{\itshape}
\spnewtheorem{clm}[theorem]{Claim}{\bfseries}{\itshape}
\spnewtheorem{cor}[theorem]{Corollary}{\bfseries}{\itshape}
\spnewtheorem{cnj}[theorem]{Conjecture}{\bfseries}{\itshape}
\spnewtheorem{lem}[theorem]{Lemma}{\bfseries}{\itshape}
\spnewtheorem{lemdefn}[theorem]{Lemma and Definition}{\bfseries}{\itshape}
\spnewtheorem{prop}[theorem]{Proposition}{\bfseries}{\itshape}
\spnewtheorem{defn}[theorem]{Definition}{\bfseries}{\upshape}
\spnewtheorem{rem}[theorem]{Remark}{\bfseries}{\upshape}
\spnewtheorem{notation}[theorem]{Notation}{\bfseries}{\upshape}
\spnewtheorem{expl}[theorem]{Example}{\bfseries}{\upshape}
\spnewtheorem{thmdefn}[theorem]{Theorem and Definition}{\bfseries}{\itshape}
\spnewtheorem{propdefn}[theorem]{Proposition and Definition}{\bfseries}{\itshape}
\spnewtheorem{assumption}[theorem]{Assumption}{\bfseries}{\upshape}
\spnewtheorem{algorithm}[theorem]{Algorithm}{\bfseries}{\upshape}
\authorrunning{J. Forster, L. Schr{\"o}der, P. Wild, H. Beohar, S. Gurke, K. Messing}
\author{Jonas Forster \inst{1} \and Lutz Schr\"oder \inst{1} \and Paul Wild \inst{1} \and Harsh Beohar \inst{2} \and Sebastian Gurke \inst{3} \and Karla Messing\inst{3}}
\institute{Friedrich-Alexander-Universität Erlangen-Nürnberg \and University of Sheffield \and Universität Duisburg-Essen}
\title{Graded Semantics and Graded Logics for Eilenberg-Moore Coalgebras}
\newcommand{\by}[1]{\qquad \text{(#1)}}
\newcommand{\power}{\mathcal P} 
\newcommand{\dist}{\mathcal D} 
\newcommand{\id}{\mathit{id}} 
\newcommand{\Id}{\mathit{Id}} 
\newcommand{\V}{\mathcal{V}}
\newcommand{\pmet}{\mathbf{PMet}}
\newcommand{\dpmet}{\mathbf{DPMet}}
\newcommand{\met}{\mathbf{Met}}
\newcommand{\Ccat}{\mathbf{C}}
\newcommand{\alg}[1]{\mathbf{EM}(#1)}
\newcommand{\Set}{\mathbf{Set}}
\newcommand{\modal}[1]{\langle #1 \rangle}
\newcommand{\eqnum}{\refstepcounter{equation}\textup{\tagform@{\theequation}}}
\newcommand{\Galg}[2]{\text{Alg}_#1(#2)}
\newcommand{\mbar}{\overline{M}_1}
\newcommand{\sem}[1]{\llbracket #1 \rrbracket}
\newcommand{\ev}[1]{\text{ev}_{#1}}
\newcommand{\nat}{\mathbb{N}}
\begin{document}

\maketitle

\begin{abstract}
  Coalgebra, as the abstract study of state-based systems, comes
  naturally equipped with a notion of behavioural equivalence that
  identifies states exhibiting the same behaviour. In many cases,
  however, this equivalence is finer than the intended
  semantics. Particularly in automata theory, behavioural equivalence
  of nondeterministic automata is essentially bisimilarity, and thus
  does not coincide with language equivalence. Language equivalence
  can be captured as behavioural equivalence on the determinization,
  which is obtained via the standard powerset construction. This
  construction can be lifted to coalgebraic generality, assuming a
  so-called Eilenberg-Moore distributive law between the functor
  determining the type of accepted structure (e.g.\ word languages)
  and a monad capturing the branching type (e.g.\ nondeterministic,
  weighted, probabilistic). Eilenberg-Moore-style coalgebraic
  semantics in this sense has been shown to be essentially subsumed by
  the more general framework of graded semantics, which is centrally
  based on graded monads. Graded semantics comes with a range of
  generic results, in particular regarding invariance and, under
  suitable conditions, expressiveness of dedicated modal logics for a
  given semantics; notably, these logics are evaluated on the original
  state space. We show that the instantiation of such graded logics to
  the case of Eilenberg-Moore-style semantics works extremely
  smoothly, and yields expressive modal logics in essentially all
  cases of interest. We additionally parametrize the framework over a
  quantale of truth values, thus in particular covering both the
  two-valued notions of equivalence and quantitative ones, i.e.\
  behavioural distances.
\end{abstract}

\section{Introduction}
\label{sec:introduction}

When dealing with the logical aspects of state-based systems, one is
particularly interested in the property of \emph{expressiveness}, that
is, the ability of a logic to differentiate between states that behave
in different ways. The prototypical example of this property is
captured by the \emph{Hennessy-Milner theorem}
\cite{DBLP:conf/icalp/HennessyM80}, with modal logic distinguishing
states in finitely branching transition systems precisely up to
bisimilarity. There is, however, a wide array of equivalences of
interest that are coarser than bisimilarity
\cite{g:linear-branching-time}, each necessitating a different type of
logic to stay invariant under the semantics while ensuring
expressiveness. A similar story unfolds when state-based systems are
studied abstractly as coalgebras for a given functor that encapsulates
the transition type of systems~\cite{DBLP:journals/tcs/Rutten00}: The
finest and mathematically most convenient type of equivalence is given
by coalgebraic behavioural equivalence, with much of the literature on
coalgebraic logic focusing on expressiveness with respect to this type
of equivalence (e.g.\
\cite{DBLP:journals/ndjfl/Pattinson04,DBLP:journals/tcs/Schroder08,DBLP:journals/jcss/MartiV15,DBLP:conf/concur/KonigM18,DBLP:journals/lmcs/WildS22,DBLP:conf/csl/Forster0HNSW23}),
though this might not necessarily be the equivalence the application
demands. Consider for example nondeterministic automata,
i.e. coalgebras for the $\Set$-endofunctor
$2 \times (\mathcal{P}{-})^\Sigma$. The equivalence of interest in
these systems is language equivalence, and as such is potentially much
coarser than the coalgebraic notion of behavioural equivalence, which
in this case instantiates to bisimilarity. A possible way to deal with
this mismatch is to first transform the nondeterministic automaton
into a deterministic one, that is, a coalgebra for the
$\Set$-endofunctor $F=2 \times (-)^\Sigma$, via the powerset
construction, obtaining language equivalence as behavioural
equivalence in the determinized automaton. The powerset construction
generalizes to coalgebras for functors of the form $FT$ where~$F$ is a
functor capturing the type of accepted structure (e.g.\ word languages
for~$F=2 \times (-)^\Sigma$ as above) and a monad~$T$ capturing the
branching type of systems ($T=\power$ as above captures
nondeterminism; other choices of~$T$ capture, e.g., probabilistic or
weighted branching). To be applicable, this approach requires a
so-called Eilenberg-Moore distributive law of~$T$ over~$F$
\cite{DBLP:conf/fsttcs/SilvaBBR10}; it then equips $FT$-coalgebras
with a language-type semantics determined by~$F$, to which we refer as
\emph{Eilenberg-Moore semantics}. 

Our present aim is to obtain modal logics that are expressive and
invariant for Eilenberg-Moore semantics, and at the same time can be
seen as fragments of the standard expressive branching-time
coalgebraic modal logics (in analogy to logics for the
linear-time/branching-time spectrum of labelled transition system
semantics~\cite{g:linear-branching-time}, which are fragments of
standard Hennessy-Milner logic). To this end, we exploit the machinery
of \emph{graded semantics}
\cite{DBLP:conf/calco/MiliusPS15,DBLP:conf/concur/DorschMS19},
in which notions of behavioural equivalence are modelled by mapping
into a \emph{graded monad}~\cite{Smirnov08}; it has been shown that
Eilenberg-Moore semantics can essentially be cast as a graded
semantics~\cite{DBLP:conf/birthday/KurzMPS15,DBLP:conf/calco/MiliusPS15}. 
Graded semantics comes with a general notion of invariant \emph{graded
  logic} and a criterion for a graded logic to be
expressive~\cite{DBLP:conf/calco/MiliusPS15,DBLP:conf/concur/DorschMS19}.

\paragraph*{Contribution} By instantiating the expressivity criterion
of the graded semantics framework to Eilenberg-Moore semantics, we
show that it is sufficient to provide a set of modal operators that
separate the elements of $FX$, while the treatment of~$T$ is
automatically provided by the framework. Separation of~$FX$ is
typically easy to ensure, justifying the slogan that Eilenberg-Moore
semantics essentially always admits an expressive invariant logic. We
parametrize our results over the choice of a quantale that serves as a
domain of both distances and truth values, allowing an instantiation to both
the two-valued setting, where states are either equivalent or not and
formulae take binary truth values, and to quantitative settings, where
similarity of states is a continuum and formulae may take intermediate
values, for instance in the real unit interval. We thus in particular
cover notions of \emph{behavioural distance}
(e.g.~\cite{bw:behavioural-pseudometric,DBLP:conf/fsttcs/BaldanBKK14,bbkk:coalgebraic-behavioral-metrics}),
providing logics that are expressive in the sense that the behavioural
distance between states is always witnessed by differences in the
evaluation of suitable formulae. We discuss a range of examples, in
some cases obtaining new characteristic modal logics, e.g.\ for
probabilistic trace equivalence of reactive probabilistic automata
with black-hole termination.

\paragraph*{Related work} There has been a fair amount of work on
the coalgebraic treatment of system semantics beyond branching time.
Approaches using Kleisli-type distributive laws~\cite{HasuoEA07} and
Eilenberg-Moore distributive laws
(e.g.~\cite{DBLP:conf/fsttcs/SilvaBBR10,KissigKurz10,BonchiEA12,DBLP:journals/jcss/Jacobs0S15})
are subsumed by graded
semantics~\cite{DBLP:conf/birthday/KurzMPS15,DBLP:conf/calco/MiliusPS15}. The
Kleisli approach has also been applied to infinite-trace semantics
(e.g.~\cite{DBLP:journals/entcs/Jacobs04a,DBLP:journals/corr/KerstanK13,DBLP:conf/calco/UrabeH15,DBLP:journals/fuin/Cirstea17a})
 and to trace semantics via steps~\cite{DBLP:journals/logcom/RotJL21}. Klin and
Rot~\cite{KlinRot15} present a notion of semantics based on selecting
a modal logic, which is then expressive by definition of the
semantics. For our present purposes, the most closely related piece of
previous work uses corecursive algebras as a unifying concept
subsuming the Kleisli-based, Eilenberg-Moore-based, and logic-based
approaches to coalgebraic trace
semantics~\cite{DBLP:journals/logcom/RotJL21}. In particular, the
comparison between the Eilenberg-Moore-based and the logic-based
semantics in this
framework~\cite[Section~7.1]{DBLP:journals/logcom/RotJL21} can be read
as an expressiveness criterion for logics over Eilenberg-Moore
semantics. In relation to this criterion, the distinguishing feature
of our present main result lies in the concreteness of the
construction of the logic in terms of modal and propositional
operators, as well as the ease of checking our expressiveness
criterion, which comes essentially for free in all cases of
interest. We note also that our criterion works in quantalic
generality, and thus applies also to notions of behavioural distance,
which so far are not covered in the approach via corecursive algebras.

Through its applicability to behavioural distances, our work relates
additionally to a spate of recent work on the coalgebraic treatment of
characteristic logics for behavioural distances. For the most part,
such results have been concerned mainly with branching-time distances
(e.g.~\cite{DBLP:conf/concur/KonigM18,DBLP:journals/lmcs/WildS22,kkkrh:expressivity-quantitative-modal-logics,DBLP:conf/csl/Forster0HNSW23}).

Kupke and Rot~\cite{DBLP:journals/lmcs/KupkeR21} study logics for
\emph{coinductive predicates}, generalizing branching-time behavioural
distances. Our overall setup differs from the one used
in~\cite{DBLP:journals/lmcs/KupkeR21} by working with coalgebras for
functors that live natively on metric spaces, including such functors
that are not liftings of a set functor.

In recent work by König and (some of) the present
authors~\cite{bgkm:hennessy-milner-galois,DBLP:conf/stacs/bgk24},
expressive logics for coalgebraic trace-type behavioural distances
have been approached by setting up Galois connections between logics
and distances. This concept is highly general (and in fact not even
tied to models being coalgebras) but requires a comparatively high
amount of effort for concrete instantiations. Moreover, its focus is
on fixpoint characterizations of logical distance rather than on
expressiveness w.r.t.\ a given notion of behavioural distance, and in
fact the behaviour function inducing behavioural distance is defined
directly via the logic.

\section{Preliminaries}
\label{sec:preliminaries}

We assume basic familiarity with category theory
(e.g.~\cite{AdamekHerrlich90}). In the following, we recall requisite
definitions and facts on universal coalgebra, quantales, and lifting
functors to categories of monad algebras.

\subsection{Universal Coalgebra}

State-based systems of various types, such as non-deterministic,
probabilistic, weighted, or game-based transition systems, are treated
uniformly in the framework of \emph{universal
  coalgebra}~\cite{DBLP:journals/tcs/Rutten00}. The branching type of
a system is encapsulated as a functor $G\colon\Ccat\to\Ccat$ on a
suitable base category~$\Ccat$, for instance on the category $\Set$ of
sets and maps. A \emph{$G$-coalgebra} $(C,c)$ then consists of a
$\Ccat$-object~$C$, thought of as an object of \emph{states}, and a
morphism $c\colon C\to GC$, thought of as a \emph{transition map} that
assigns to each state a structured collection of successor states,
with structure determined by~$G$. For instance, on $\Ccat=\Set$, a
$\power$-coalgebra for the covariant powerset functor is just a
nondeterministic transition system, while a $G$-coalgebra for the
functor~$G$ given by $GX=2\times X^\Sigma$, with~$\Sigma$ a fixed
\emph{alphabet}, is a deterministic automaton (without initial state),
assigning to each state a finality status and a tuple of successors,
one for every letter in~$\Sigma$.

A \emph{morphism} $h\colon (C,c)\to(D,d)$ of $G$-coalgebras is a
$\Ccat$-morphism $h\colon C\to D$ that is compatible with the
transition maps in the sense that $d\cdot h=Gh\cdot c$. States
$x,y\in C$ in a coalgebra $(C,c)$ are \emph{behaviourally
  equivalent} if there exist a $G$-coalgebra $(D,d)$ and a morphism
$h\colon (C,c)\to (D,d)$ such that $h(x)=h(y)$. For instance, two
states in a labelled transition system (i.e.\ a coalgebra for
$G=\power(\Sigma\times (-))$ where~$\Sigma$ is the set of labels) are
behaviourally equivalent iff they are bisimilar in the usual sense.

The (initial $\omega$-segment of) the \emph{final chain} of~$G$ is the
sequence $(G^n1)_{n<\omega}$ of $\Ccat$-objects. Given a $G$-coalgebra
$(C,c)$, we have the \emph{canonical cone} of maps
$c_n\colon C\to G^n 1$, defined by~$c_0$ being the unique map $C\to 1$
and by ${c_{n+1}=C\xrightarrow{c}
GC\xrightarrow{Gc_n}G^{n+1}1}$. When $\Ccat$ is a concrete category over $\Set$,
states $x,y\in C$ are termed
\emph{finite-depth behaviourally equivalent} if $c_n(x)=c_n(y)$ for
all $n\in\nat$. For finitary set functors, finite-depth behavioural
equivalence and behavioural equivalence coincide~\cite{Worrell00}.

\subsection{Quantales}\label{sec:quantales}

We use (symmetrized) \emph{quantale-enriched categories} as a joint
generalization of equivalence relations and pseudometric spaces; this
enables us to cover both two-valued and quantitative semantics and
logics uniformly in one framework. In a nutshell, a quantale is a
monoid in the category of complete join semilattices. Explicitly, this
notion expands as follows:

\begin{defn} A (commutative unital) \emph{quantale}
  $\V = (V, \otimes, k, \leq)$ consists of a set $V$ that carries both
  the structure of a complete lattice $(V, \leq)$ and the structure of
  a commutative monoid $(V, \otimes, k)$ such that for all $v \in V$,
  the operation ${-} \otimes v$ is join-continuous; that is,
  \begin{equation*}\textstyle
    \left(\bigvee_{i\in I} u_i \right)\otimes v = \bigvee_{i\in I} \left(u_i \otimes v\right)
  \end{equation*}
  where we use $\bigvee$ to denote joins.
\end{defn}

\noindent By the standard equivalence between join preservation and
adjointness for functions on complete lattices, it follows that for
every $b\in V$, the map ${-} \otimes b $ has a right adjoint
$[b, {-}]$, with defining property
\begin{equation*}
  a \otimes b \leq c  \Leftrightarrow a \leq [b,c]
\end{equation*}
As first observed by Lawvere~\cite{lawvere1973metric}, metric spaces
can be seen as enriched categories, which leads to the notion of
categories enriched in a quantale~$\V$, or briefly $\V$-categories, as
a generalized notion of (pseudo-)metric space:
\begin{defn}
  A \emph{$\V$-category} is a pair $(X, d_X)$ consisting of a set $X$ and a
  function $d_X\colon X\times X\to V$ such that for all
  $x, y, z \in X$ we have $d_X(x, y) \otimes d_X(y, z) \leq d_X(x, z)$,
  as well as $k \leq d_X(x, x)$. A $\V$-category $(X, d_X)$ is
  \emph{symmetric} if $d_X(x, y) = d_X(y, x)$ for all $x,y\in X$, and
  \emph{separated} if $k \le d_X(x,y)$ implies $x=y$. A function
  $f\colon X \to Y$ is a \emph{$\V$-functor} between $\V$-categories
  $(X, d_X)$ and $(Y, d_Y)$ if $d_X(a, b) \leq d_Y(f(a), f(b))$ for
  all $a, b \in X$.
\end{defn}

\noindent We fix a quantale~$\V$ for the rest of the technical
development. We write $\dpmet_\V$ for the category of $\V$-categories
and $\V$-functors, which we view as generalized directed pseudometric
spaces, with distance values in~$\V$. Further, we write $\pmet_\V$ for
the full subcategory of symmetric $\V$-categories, viewed as
generalized pseudometric spaces, and $\met_\V$ for the full
subcategory of symmetric and separated $\V$-categories, viewed as
generalized metric spaces.  The quantale $\V$ itself
has the structure of an object in $\dpmet_\V$, where
$d(x, y) = [x, y]$ for all $x, y \in \V$. It may also be viewed as an
object in $\met_\V$ through symmetrization:
$d_\text{sym}(x, y) = [x, y] \wedge [y,x]$. In this way, we will often
use $\V$ as the codomain of evaluation morphisms of our logics.
We will focus on the following two examples:
\begin{expl}\label{expl:quantales}
  \begin{enumerate}[wide]
  \item\label{item:two} The lattice $2=\{\bot,\top\}$ carries a
    quantale $\mathbf{2} = (2, \wedge, \top, {\leq})$. In this case,
    $[b,c]$ is just the Boolean implication $b\to c$. The category
    $\pmet_\mathbf{2}$ is isomorphic to the category of setoids, i.e.\
    of equivalence relations and equivalence-preserving maps, while
    the category $\met_\textbf{2}$ is isomorphic to the category of
    sets and functions. We use this quantale to cover two-valued
    equivalences, used in situations where one is only interested in
    determining whether states behave in precisely the same way or
    not.
  \item We use the quantale $[0,1]_\oplus = ([0,1], \oplus, 0, \geq)$,
    where~$\oplus$ is truncated addition ($a\oplus b=\min(a+b,1)$), to
    cover cases where one wishes to measure differences in the
    behaviour of states in a continuous manner. In this case, $[-,-]$
    is truncated subtraction ($[b,c]=\max(c-b,0)$). Indeed, taking
    $[-,1]$ as negation makes $[0,1]_\oplus$ into an MV-algebra,
    providing a domain of truth values for multi-valued \L{}ukasiewicz
    logic.  The category $\met_{[0,1]_\oplus}$ is isomorphic to the
    usual category of $1$-bounded metric spaces and non-expansive
    maps~\cite{lawvere1973metric}, while $\pmet_{[0,1]_\oplus}$ is isomorphic to the category
    of pseudometric spaces (that is, distinct elements may take
    distance $0$). Note that the ordering on the set $[0,1]$ is
    reversed compared to its natural ordering. This is necessary,
    since otherwise $\oplus$ does not   distribute over the
    empty join.
	\end{enumerate}
\end{expl}

\noindent We will use the concept of initiality (in the concrete case of $\V$-categories)
to describe the fact that a set of morphisms is large enough to witness the
distances in its domain. Later, expressivity demands that the set of
evaluation morphisms of formulae form an initial source.
\begin{defn}

  A source $\mathfrak{A}$ of $\V$-functors
  $f_i\colon (X, d_X) \to (Y_i, d_{Y_i})$ is \emph{initial} if
  $d_X(x, y) = \bigwedge_{i \in I}d_{Y_i}(f_i(x), f_i(y))$ for all
  $x, y \in X$.

\end{defn}

\subsection{Lifting Functors to Eilenberg-Moore Categories}\label{sec:monads}

Recall that a \emph{monad} $(T,\mu,\eta)$, denoted just~$T$ by abuse
of notation, on a base category~$\Ccat$ consists of a functor
$T\colon\Ccat\to\Ccat$ and natural transformations
${\mu\colon TT\Rightarrow T}$, as well as $\eta\colon\Id\Rightarrow T$ (the \emph{multiplication}
and \emph{unit} of~$T$) satisfying natural laws. Monads on~$\Set$
may be thought of as encapsulating algebraic theories, with~$TX$ being
terms over~$X$ modulo provable equality, $\mu$ collapsing layered
terms into terms, thus abstracting substitution, and~$\eta$ converting
variables into terms. We call a monad~$T$ \emph{affine} \cite{DBLP:conf/cmcs/Jacobs16}
when $T$ preserves the terminal
object, that is $T1 \cong 1$. For
instance, the \emph{distribution monad} $\dist$, given by $\dist X$
being the set
\begin{equation*}
  \{f \colon X\to[0,1]\mid f(x)=0\text{ for almost all $x\in X$, }
  \textstyle\sum_{x\in X}f(x)=1\}
\end{equation*}
of finitely supported probability
distributions on~$X$, is affine. Monads induce a natural notion of
algebra: A \emph{monad algebra} or \emph{Eilenberg-Moore algebra}
$(A,a)$ for~$T$ consists of a $\Ccat$-object~$A$ and a morphism
$a\colon TA \to A$ making the left and middle diagrams below commute.
\begin{equation*}
  \begin{tikzcd}
    A \arrow[r, "\eta_A"]\arrow[dr, "\id_A" below left] & TA \arrow[d, "a"]\\
    & A
  \end{tikzcd}
  \qquad
  \begin{tikzcd}
  TTA \arrow[r, "Ta"] \arrow[d, "{\mu_{A}}" left]  & TA \arrow[d, "{a}"] \\
  TA \arrow[r, "{a}" below]                        & A
\end{tikzcd}
\qquad
\begin{tikzcd}
  TA \arrow[r, "{Tf}"] \arrow[d, "a" left]  & TB \arrow[d, "{b}"] \\
  A \arrow[r, "f" below]                        & B
\end{tikzcd}
\end{equation*}
A $\Ccat$-morphism $f\colon A \to B$ is a morphism between algebras
$f\colon (A, a) \to (B,b)$ if the right diagram commutes.  We write
$\alg{T}$ for the category of Eilenberg-Moore algebras for~$T$ and
their morphisms. We denote the functor that takes a $\Ccat$-object~$A$
to the free $T$-algebra $(TA, \mu)$ over~$A$ by
$L\colon \Ccat \to \alg{T}$. This functor is left adjoint to the
forgetful functor $R\colon \alg{T} \to \Ccat$ that takes algebras
$(A, a)$ to their carrier $A$.  The category $\alg{T}$ has all limits
that $\Ccat$ has \cite[Proposition 20.12]{AdamekHerrlich90}. We
occasionally need the $n$-fold power $(A, a)^n$ of an algebra
$(A, a)$, whose carrier is the $\Ccat$-object $A^n$. We denote its
algebra structure by $a^{(n)}\colon T(A^n) \to A^n$.

Coalgebraic determinization~\cite{DBLP:conf/fsttcs/SilvaBBR10} is
concerned with coalgebras for functors of the form $G=FT$ where~$T$ is
a monad, thought of as capturing the branching type of systems,
and~$F$ is a functor determining the system semantics. As indicated in
the introduction, the basic example is given by nondeterministic
automata over an alphabet~$\Sigma$, which are coalgebras for the set
functor $G=FT$ with $FX=2\times X^\Sigma$ and $T=\power$, while
$F$-coalgebras are deterministic automata. The coalgebraic
generalization of the powerset construction that determinizes
nondeterministic automata relies on having a suitable type of
distributive law between~$F$ and~$T$:

\begin{defn}
  \label{def:em-laws}
  An \emph{Eilenberg-Moore distributive law}, or just \emph{EM law}, of a monad
  $(T, \mu, \eta)$ over a functor $F$ is a natural transformation
  $\zeta\colon TF\Rightarrow FT$ such that the following diagrams
  commute:
  \begin{equation*}
    \begin{tikzcd}
      F \arrow[rd, "F\eta"] \arrow[r, "\eta F"] & TF \arrow[d, "\zeta"] \\
      & FT
    \end{tikzcd}
    \qquad
	  \begin{tikzcd}
      TTF \arrow[d, "\mu F"] \arrow[r, "T\zeta"] & TFT \arrow[r, "\zeta T"] & FTT \arrow[d, "F\mu"] \\
      TF \arrow[rr, "\zeta"]                     &                          & FT
    \end{tikzcd}
  \end{equation*}
\end{defn}
\noindent It is well-known (cf.\ \cite{MULRY2002:LiftingAlgebras}) that EM laws
$\zeta \colon TF \Rightarrow FT$ are in 1-1 correspondence with
liftings $\tilde{F}$ of the functor $F$ to the Eilenberg-Moore
category $\mathbf{EM}(T)$. Given~$\zeta$, the functor $\tilde F$ maps the
$T$-algebra $(A, a)$ to $(FA, Fa \cdot \zeta_A)$. As a result, every
$FT$-coalgebra $c\colon X\to FTX$ can be determinized in the presence
of an EM law~\cite{DBLP:conf/fsttcs/SilvaBBR10}, yielding an
$\tilde F$-coalgebra $c^\#\colon TX\to FTX$ in $\alg{T}$ as follows:
\[
TX \xrightarrow{Tc} TFTX \xrightarrow{\zeta_{TX}} FTTX \xrightarrow{F\mu_X} FTX
\]
Taking a more abstract perspective, $c^\#$ is the adjoint transpose of~$c$ under
${L\dashv R}$. We say that states $c,d\in C$ are \emph{EM-equivalent} if
$\eta_C(c),\eta_C(d)\in TC$ are behaviourally equivalent in the
$\tilde{F}$-coalgebra $(TC,c^\#)$. We refer to this equivalence as
\emph{EM semantics}; when this equivalence can be captured as the
kernel of a suitable map (in this case, the map assigning to each
state its accepted language), we also refer to this map as the EM
semantics. We will later encounter situations where the codomain of
the semantics carries a generalized metric structure, in which case we
will also subsume the induced generalized pseudometric on~$C$ under
the moniker `EM semantics'.

The standard powerset construction for determinizing nondeterministic automata is recovered by the following EM law:
\begin{expl}\label{ex:powerset-construction}
	In $\textbf{Set}$ (i.e. $\met_\textbf{2}$), let $T = \mathcal{P}$ be the powerset monad and $F = 2 \times {-}^\Sigma$. 
  The determinization $(\mathcal{P}C, c^{\#})$ of an $FT$-coalgebra
  $(C,c)$ w.r.t.\ the EM law $\zeta\colon TF\Rightarrow FT$ defined by
  \begin{equation*}\textstyle
    \zeta(t) = \Big( \bigvee_{(v, f)\in t} v,\quad \lambda a.\{ f(a)
    \mid (v, f) \in t\}\Big)
  \end{equation*}
  for $t \in \power(2 \times X^\Sigma)$ is precisely the powerset
  construction. Thus, the language semantics of nondeterministic
  automata is an instance of EM semantics.
\end{expl}

\begin{expl}
  \label{ex:distributive-machine}
  More generally, let $F = A \times {-}^\Sigma$ where $\Sigma$ is discrete, let~$T$ be a monad on
  $\Set$, and suppose that~$A$ carries a $T$-algebra structure
  $a \colon TA \to A$.  Define a natural transformation
  $\delta\colon T({-}^\Sigma) \Rightarrow (T{-})^\Sigma$ by
  $\delta(t)(\sigma) = T(\lambda f. f(\sigma))(t)$.  We then have an
  EM law
  $\zeta \colon T(A \times {-}^\Sigma) \Rightarrow A\times
  {(T-)}^\Sigma$ given (componentwise) by
  \begin{align*}
   \pi_1 \cdot \zeta_X & = (T(A \times {X}^\Sigma) \xrightarrow{T\pi_1} TA \xrightarrow{a} A)\\
     \pi_2 \cdot \zeta_X &= (T(A \times {X}^\Sigma) \xrightarrow{T\pi_2} T({X}^\Sigma) \xrightarrow{\delta} (TX)^\Sigma).
  \end{align*}
  The arising EM semantics assigns to each state~$x$ a map
  $\Sigma^*\to A$, which may be thought of as assigning to each word
  $w\in\Sigma^*$ the degree (a value in~$A$) to which~$x$ accepts~$w$.
\end{expl}

\section{Graded Semantics and Graded Logics}

\emph{Graded
  semantics}~\cite{DBLP:conf/calco/MiliusPS15,DBLP:conf/concur/DorschMS19}
uniformly captures a wide range of semantics on various system types
and of varying degrees of granularity as found, for instance, on the
linear-time/branching-time spectrum of labelled transition system
semantics~\cite{g:linear-branching-time}. Here, we are interested
primarily in applying general results provided by the framework of
graded semantics to the setting of EM semantics, which is, in essence,
subsumed by graded
semantics~\cite{DBLP:conf/birthday/KurzMPS15,DBLP:conf/calco/MiliusPS15}. We
recall the basic definition of graded semantics as such, and then give
a new perspective on a general notion of characteristic modal logics
for graded semantics, so-called graded logics.

\subsection{Graded Semantics}

The concepts central to graded semantics are those of graded monads
and graded algebras. These are very similar to those of monads and
monad algebras as recalled in Section~\ref{sec:monads} but, in the
mentioned analogy with universal algebra, equip operations and terms
with a \emph{depth} that, in the application to system semantics,
records the depth of look-ahead; that is, the depth corresponds to the
(exact) number of transition steps considered. We briefly review the
formal definitions.

\begin{defn}[Graded Monad]
    A \emph{graded monad} $\mathbb{M}$ on a category\;$\Ccat$ consists of a family of functors $M_n\colon \Ccat \to \Ccat$ for $n \in \nat$, a natural transformation ${\eta\colon  Id \Rightarrow M_0}$ (the \emph{unit}), and a family of natural transformations ${\mu^{n,k}\colon M_nM_k \Rightarrow M_{n+k}}$ for all $n, k \in \nat$ (the \emph{multiplication}) such that for all $n, k, m \in \nat$ the following diagrams commute:

\begin{equation*}
\begin{tikzcd}
& M_n \arrow[rd, "M_n \eta"] \arrow[ld, "\eta M_n"'] \arrow[d, "id_{M_n}"] &\\
M_0M_n \arrow[r, "\mu^{0,n}"] & M_n & M_nM_0 \arrow[l, "\mu^{n,0}"']
\end{tikzcd}
\qquad
\begin{tikzcd}[column sep=huge]
M_nM_kM_m \arrow[r, "M_n\mu^{k,m}"] \arrow[d, "\mu^{n,k}M_m"]  & M_nM_{k+m} \arrow[d, "{\mu^{n,k+m}}"] \\
M_{n+k}M_m \arrow[r, "{\mu^{n+k,m}}"]                        & M_{n+k+m}
\end{tikzcd}
\end{equation*}
\end{defn}

\begin{defn}[Graded semantics]\label{def:graded-sem}
  A \emph{graded semantics} $(\alpha, \mathbb{M})$ for an endofunctor
  $G\colon \Ccat \to \Ccat$ consists of a graded monad $\mathbb{M}$ on
  $\Ccat$ and a natural transformation $\alpha\colon G \Rightarrow M_1$. If
  $(C, c)$ is a $G$-coalgebra, then we define the $n$-step behaviour
  $c^{(n)}\colon C \to M_n1$, for $n \in \nat$, by
  \begin{equation*}
    c^{(0)}=( X \xrightarrow{M_0! \cdot \eta} M_01)
    \qquad c^{(n+1)}( X \xrightarrow{\alpha \cdot c} M_1X \xrightarrow{M_1c^{(n)}} M_1M_n1 \xrightarrow{\mu^{1n}} M_{n+1}1).
  \end{equation*}
\end{defn}

\noindent We think of $c^{(n)}$ as assigning to a state in~$C$ its behaviour
after $n$ steps. We illustrate this more concretely in
Example~\ref{ex:gradedMonads}. 
We are mainly interested in the case where the base category~$\Ccat$
is a category of generalized (directed) pseudometric spaces
(Section~\ref{sec:quantales}). In this case, a graded semantics
induces a notion of behavioural distance:

\begin{defn}[Behavioural distance]
  \label{def:graded-trace-distance}
  When $\Ccat$ is $\dpmet_\V$ (or $\pmet_\V$, $\met_\V$), then we
  define the \emph{behavioural distance} of two states $x, y \in C$ of
  a $G$-coalgebra $(C,c)$ under a graded semantics
  $(\alpha, \mathbb{M})$ to be
	\begin{equation*}\textstyle
		d^{b}(x, y) = \bigwedge_{n \in \nat}d_{M_n1}(c^{(n)}(x),
  c^{(n)}(y))	.
	\end{equation*}
\end{defn}
\begin{rem}
	In case $\V=\mathbf{2}$ (Example~\ref{expl:quantales}.\ref{item:two}),
  behavioural distance is two-valued, and thus in fact constitutes
  either a preorder (if $\Ccat$ is $\dpmet_\V$) or an equivalence (if
  $\Ccat$ is $\pmet_\V$).
\end{rem}

\begin{expl}
  \label{ex:gradedMonads}
  We recall two basic examples of graded
  monads~\cite{DBLP:conf/calco/MiliusPS15} and associated
  graded semantics, capturing branching-time semantics and EM
  semantics, respectively. In both cases, it happens that~$\alpha$ is
  identity; this need not always be the case,
  however~\cite{DBLP:conf/concur/DorschMS19}.
  \begin{enumerate}[wide]
  \item\label{item:branching} Any functor $F$ induces a graded monad
    $\mathbb{M}_F$ where the functor parts $M_n = F^n$ are given via
    repeated application of $F$ and both multiplication and unit are
    identity. The arising graded semantics of $F$-coalgebras is
    branching-time semantics, specifically finite-depth behavioural
    equivalence (which coincides with behavioural equivalence if~$F$
    is finitary).

  \item\label{item:emgm} Any EM law $\zeta\colon TF \Rightarrow FT$
      induces a graded monad $\mathbb{M}_\zeta$ where ${M_n = F^n T}$. The
    unit of $\mathbb{M}_\zeta$ is the unit of $T$.  We define an
    iterated distributive law
          ${\zeta^{(n)}\colon TF^n \Rightarrow F^n T}$ by putting
    \begin{equation*}
      \zeta^{(0)} = \id\quad\text{and}\quad
      \zeta^{(n+1)} = TF^{n+1} \xrightarrow{\zeta F^n} FTF^n
      \xrightarrow{F\zeta^{(n)}} F^{(n+1)}T.
  \end{equation*}
  The multiplications of the graded monad $\mathbb{M}_\zeta$ are then
          given by~${\mu^{m,n} =F^{n+m}\mu \cdot F^m\zeta^{(n)}T}$. The arising
  graded semantics is essentially EM semantics, in the sense that the
  latter is obtained by erasing further information by postcomposing
  the maps $c^{(n)}\colon C\to F^n T1$ (in the notation of
  Definition~\ref{def:graded-sem}) with $F^n!$ where $!$ is the unique
  map $T1\to 1$. In particular, the EM semantics and the graded
  semantics introduced by an EM law for~$T$ agree exactly if~$T$ is
  affine (Section~\ref{sec:monads}). Otherwise, the information erased
  by $F^n!$ essentially concerns the possibility of executing certain
  words, without regard to their
  acceptance~\cite[Section~5]{DBLP:conf/birthday/KurzMPS15}. For a
  concrete example where~$T$ is affine, consider $T=\dist$ (the
  distribution monad, cf.\ Section~\ref{sec:monads}) and
  $FX=[0,1]\times X^\Sigma$, with an EM law~$\zeta$ as per
  Example~\ref{ex:distributive-machine}. Then $FT$-coalgebras are
  reactive probabilistic automata, and for a state~$x$ in an
  $FT$-coalgebra,
  $c^{(n)}(x)\in F^n\dist 1\cong F^n1\cong [0,1]^{\Sigma^{<n}}$ assigns
  to each word of length~$<n$ over~$\Sigma$ its probability of being
  accepted.
  \end{enumerate}
  Note that~\ref{item:branching}.\ is the special case
  of~\ref{item:emgm}.\ where $T=\Id$.
\end{expl}
\noindent We will in fact be interested exclusively in graded monads
that are, in the universal-algebraic
view~\cite{DBLP:conf/calco/MiliusPS15,DBLP:conf/concur/DorschMS19},
presented by operations and equations of depth at most~$1$, which
intuitively means that identifications among behaviours do not depend
on looking more than one step ahead. Categorically, this notion is
captured as
follows~\cite[Proposition~7.3]{DBLP:conf/calco/MiliusPS15}:
\begin{defn}
  \label{def:depthone}
  We say that a graded monad is \emph{depth-$1$} if for all
  $n\in \mathbb{N}$, $\mu^{1,n}$ is a coequalizer in the following
  diagram:
\begin{equation*}
\begin{tikzcd}
	M_1M_0M_nX \arrow[r,swap, "\mu^{1,0}M_n", shift right] \arrow[r, "M_1\mu^{0,n}", shift left] & M_1M_nX \arrow[r, "\mu^{1,n}"] & M_{1+n}X.
\end{tikzcd}
\end{equation*}
\end{defn}

\begin{expl}
  All graded monads described in Example~\ref{ex:gradedMonads} are depth-1.
\end{expl}

\noindent The semantics of modalities in graded logics will rely on a
graded variant of the notion of monad algebra:

\begin{defn}[Graded algebra]
  Let $\mathbb{M}$ be a graded monad in $\Ccat$, and
    ${n\in\nat \cup \{\omega\}}$. A \emph{graded $M_n$-algebra}
  $A = ((A_k)_{k \leq n}, (a^{m,k})_{m+k \leq n})$ consists of a family
  of $\Ccat$-objects $A_k$ and morphisms
  $a^{m,k}\colon M_m A_k \to A_{m+k}$ satisfying the following
  conditions: For $m \leq n$, we have
  $a^{0,m}\cdot\eta_{A_m} = \id_{A_m}$ and additionally, if
  $m + r + k \leq n$, then the left diagram below commutes:
  \begin{equation*}
    \begin{tikzcd}[column sep=large]
      M_mM_rA_k \arrow[r, "M_ma^{r,k}"] \arrow[d, "\mu^{m,r}_{A_k}"]  & M_mA_{r+k} \arrow[d, "{a^{m,r+k}}"] \\
      M_{m+r}A_k \arrow[r, "{a^{m+r,k}}"]                        & A_{m+r+k}
    \end{tikzcd}
    \qquad
	  \begin{tikzcd}[column sep=large]
      M_mA_k \arrow[r, "{M_mf_k}"] \arrow[d, "a^{m,k}"]  & M_mB_k \arrow[d, "{b^{m,k}}"] \\
      A_{m+k} \arrow[r, "{f_{m+k}}"]                        & B_{m+k}
    \end{tikzcd}
  \end{equation*}
  A \emph{homomorphism} of $M_n$-algebras $A$ and $B$ is a family of
  maps $f_k\colon A_k \to B_k$ such that the above right diagram
  commutes for all $m + k \leq n$. For all
  $n \in \mathbb{N} \cup \{\omega\}$, the collection of $M_n$-algebras
  and their morphisms forms a category $\Galg{n}{\mathbb{M}}$.
\end{defn}

\noindent The category $\Galg{0}{\mathbb{M}}$ is the Eilenberg-Moore
category $\alg{M_0}$ for the (non-graded) monad
$(M_0, \eta, \mu^{0,0})$. The semantics of modalities in graded logics
will involve a special type of
$M_1$-algebras~\cite{DBLP:conf/concur/DorschMS19}:

\begin{defn}[Canonical algebras]
  \label{def:canonical}
  For $i \in \{0,1\}$, let
  $({-})_i: \Galg{1}{\mathbb{M}}\rightarrow \Galg{0}{\mathbb{M}}$ be
  the functor taking an $M_1$-algebra
  $A=((A_k)_{k \leq 1}, (a^{m,k})_{m+k \leq 1})$ to the $M_0$-algebra
  $(A_i, a^{0,i})$. We say that an $M_1$-algebra $A$ is
  \emph{canonical} if it is free over $({-})_0$, i.e.\ if for all
  $M_1$-algebras $B$ and $M_0$-homomorphisms
  $f: (A)_0 \rightarrow (B)_0$ there is a unique $M_1$-homomorphism
  $g: A \rightarrow B$ such that $(g)_0 = f$.
\end{defn}

\begin{lem}\label{lem:canonical}(\cite[Lemma 5.3]{DBLP:conf/concur/DorschMS19})
  An $M_1$-algebra~$A$ is canonical iff the following diagram
 is a coequalizer diagram in the category of $M_0$-algebras:
	\begin{equation*}\label{eq:coeq}
	\begin{tikzcd}
		M_1M_0A_0 \arrow[r,swap, "\mu^{1,0}", shift right] \arrow[r, "M_1a^{0,0}", shift left] & M_1A_0 \arrow[r, "a^{1,0}"] & A_1
	\end{tikzcd}
	\end{equation*}
\end{lem}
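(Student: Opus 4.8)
The plan is to show that the two notions---``$A$ is canonical'' and ``the displayed diagram is a coequalizer in $\alg{M_0}$''---are merely two phrasings of one and the same universal property: the existence and uniqueness of a factorization through $a^{1,0}$. As preliminaries I would first record the $M_0$-algebra structures that make the diagram live in $\alg{M_0}$. For any object $X$, $M_1X$ is an $M_0$-algebra via $\mu^{0,1}_X$, so $M_1M_0A_0$ and $M_1A_0$ are $M_0$-algebras, while $A_1$ is one via $a^{0,1}$. I would then check that all three arrows are $M_0$-homomorphisms: $\mu^{1,0}$ by the graded-associativity law of $\mathbb{M}$ for the triple $(0,1,0)$, the arrow $M_1a^{0,0}$ by naturality of $\mu^{0,1}$ applied to $a^{0,0}$, and $a^{1,0}$ by the algebra law $a^{1,0}\cdot\mu^{0,1}_{A_0}=a^{0,1}\cdot M_0a^{1,0}$ (the case $(m,r,k)=(0,1,0)$). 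The algebra law for $(m,r,k)=(1,0,0)$ gives $a^{1,0}\cdot M_1a^{0,0}=a^{1,0}\cdot\mu^{1,0}$, so $a^{1,0}$ coequalizes the parallel pair; hence the diagram commutes in $\alg{M_0}$ and only its universality is in question.

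The central observation I would isolate is a dictionary between factorizations. An $M_1$-homomorphism $g\colon A\to B$ with prescribed $0$-component $g_0=f$ is exactly the data of an $M_0$-homomorphism $g_1\colon A_1\to B_1$ satisfying $g_1\cdot a^{1,0}=b^{1,0}\cdot M_1f$, this being the remaining compatibility square once $g_0$ is fixed. Conversely, for any $M_0$-homomorphism $f\colon (A)_0\to (B)_0$, the composite $b^{1,0}\cdot M_1f\colon M_1A_0\to B_1$ is an $M_0$-homomorphism (a composite of $M_1f$, a homomorphism by naturality, and $b^{1,0}$, a homomorphism by the $B$-associativity law) and it coequalizes the pair $(\mu^{1,0},M_1a^{0,0})$, using naturality of $\mu^{1,0}$, the $(1,0,0)$-law in $B$, and the fact that $f$ commutes with $a^{0,0},b^{0,0}$. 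Thus coequalizing $M_0$-homomorphisms out of $M_1A_0$ and $M_1$-homomorphisms over $f$ describe the same factorization problem.

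For ``coequalizer $\Rightarrow$ canonical'', given any $M_1$-algebra $B$ and $M_0$-homomorphism $f\colon (A)_0\to (B)_0$, I would feed the coequalizing homomorphism $b^{1,0}\cdot M_1f$ into the universal property to obtain a unique $M_0$-homomorphism $g_1\colon A_1\to B_1$ with $g_1\cdot a^{1,0}=b^{1,0}\cdot M_1f$; by the dictionary, $(f,g_1)$ is the unique $M_1$-homomorphism over $f$. For the converse, given an $M_0$-algebra $(C,c)$ and a coequalizing $M_0$-homomorphism $h\colon M_1A_0\to (C,c)$, I would assemble an $M_1$-algebra $B$ with $B_0=(A_0,a^{0,0})$, $B_1=(C,c)$ and $b^{1,0}=h$; here the $(0,1,0)$-law for $B$ is precisely the statement that $h$ is an $M_0$-homomorphism and the $(1,0,0)$-law is precisely that $h$ coequalizes the pair, so $B$ is well defined and $(B)_0=(A)_0$. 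Applying canonicity with $f=\id_{A_0}$ yields a unique $M_1$-homomorphism $(\id,g_1)$, whose $1$-component $g_1\colon A_1\to C$ is then the required unique $M_0$-homomorphism with $g_1\cdot a^{1,0}=h$.

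Since the argument is essentially this translation between two universal properties, no single step is deep; the main care is needed in verifying that the assembled algebra $B$ satisfies \emph{all} the $M_1$-algebra equations (beyond the two that encode the coequalizing and homomorphism conditions) and that the two factorization descriptions are genuinely mutually inverse, so that passing back and forth preserves both existence and uniqueness and no data is silently lost.
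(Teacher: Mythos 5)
Your proposal is correct: the dictionary identifying $M_1$-homomorphisms with prescribed $0$-component $f$ with $M_0$-homomorphisms $g_1$ satisfying $g_1\cdot a^{1,0}=b^{1,0}\cdot M_1f$, together with the check that the laws of the assembled $M_1$-algebra $B=(A_0,C,a^{0,0},c,h)$ reduce exactly to homomorphy and coequalization of $h$, gives a complete proof of both directions. The paper itself offers no proof, citing Lemma 5.3 of Dorsch, Milius and Schr\"oder, where the argument is the same translation between the two universal properties, so your approach coincides with the source's.
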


\noindent Combining Definition~\ref{def:depthone} with Lemma~\ref{lem:canonical}
immediately gives us the following
fact~\cite{DBLP:conf/concur/DorschMS19}, which is a crucial ingredient
for invariance of graded logics:
\begin{prop}
  \label{canonical}
  If\/ $\mathbb{M}$ is a depth-1 graded monad, then for every
  $n\in \nat$ and every object~$X$, the $M_1$-algebra with carriers
  $M_nX$, $M_{n+1}X$ and multiplications as algebra structure is
  canonical.
\end{prop}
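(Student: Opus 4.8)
The plan is to invoke the coequalizer characterization of canonicity (Lemma~\ref{lem:canonical}) and then to recognize the resulting condition as exactly the depth-1 condition (Definition~\ref{def:depthone}) instantiated at the object~$X$.

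First I would spell out the algebra in question. Its carriers are $A_0 = M_n X$ and $A_1 = M_{n+1} X$, and ``multiplications as algebra structure'' means $a^{m,k} = \mu^{m,n+k}_X$ for $m + k \le 1$; in particular $a^{0,0} = \mu^{0,n}_X$, $a^{0,1} = \mu^{0,n+1}_X$ and $a^{1,0} = \mu^{1,n}_X$. That this is a graded $M_1$-algebra is a routine transcription of the graded-monad laws: the unit equations $a^{0,m}\cdot\eta = \id$ are the unit triangles of $\mathbb{M}$, and the associativity squares are instances of the associativity law of~$\mathbb{M}$. By Lemma~\ref{lem:canonical}, canonicity of this algebra is equivalent to the assertion that
\[
M_1 M_0 M_n X \rightrightarrows M_1 M_n X \xrightarrow{\ \mu^{1,n}_X\ } M_{n+1} X
\]
is a coequalizer in $\alg{M_0}$, the two parallel arrows being $\mu^{1,0}_{M_n X}$ (the instance of $\mu^{1,0}$ at $A_0 = M_n X$) and $M_1\mu^{0,n}_X$ (the map $M_1 a^{0,0}$). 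This diagram is literally the one whose coequalizer property Definition~\ref{def:depthone} postulates at~$X$, so the two statements coincide and the proposition follows.

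The one point that genuinely needs care, and which I expect to be the main obstacle, is the ambient category: Lemma~\ref{lem:canonical} requires a coequalizer in $\alg{M_0} = \Galg{0}{\mathbb{M}}$, so I must present the three objects as $M_0$-algebras and verify that all three arrows are $M_0$-homomorphisms. I would equip $M_1 M_0 M_n X$ and $M_1 M_n X$ with the canonical $M_0$-structures $\mu^{0,1}_{M_0 M_n X}$ and $\mu^{0,1}_{M_n X}$, and $M_{n+1} X$ with $\mu^{0,n+1}_X$. With these choices, $M_1\mu^{0,n}_X$ is a homomorphism by naturality of $\mu^{0,1}$, while $\mu^{1,0}_{M_n X}$ and $\mu^{1,n}_X$ are homomorphisms by suitable instances of graded associativity (for $\mu^{1,n}_X$, the square with indices $0,1,n$ gives $\mu^{0,n+1}_X\cdot M_0\mu^{1,n}_X = \mu^{1,n}_X\cdot\mu^{0,1}_{M_n X}$). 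Once the depth-1 coequalizer is read in $\alg{M_0}$ with exactly these structures, the identification in the previous paragraph is immediate; should one instead read Definition~\ref{def:depthone} in $\Ccat$, one further lifts the coequalizer to $\alg{M_0}$ by the standard argument that the map induced out of the coequalizer is automatically a homomorphism as soon as $M_0$ sends the coequalizing map to an epimorphism.
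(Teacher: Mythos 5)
Your proof is correct and is essentially the paper's own argument: the paper derives Proposition~\ref{canonical} precisely by combining Definition~\ref{def:depthone} with Lemma~\ref{lem:canonical}, treating the identification of the two coequalizer diagrams as immediate. Your additional verifications (that the multiplications satisfy the graded $M_1$-algebra laws, and that the three arrows are $M_0$-homomorphisms so the coequalizer can be read in $\alg{M_0}$, as Lemma~\ref{lem:canonical} requires) are exactly the routine details the paper leaves implicit, and they check out.
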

\subsection{Graded Logics as a Fragment of Branching-Time Logic}
\label{subsec:glogic-btime}
We proceed to recall the general framework of (branching time)
coalgebraic modal
logic~\cite{DBLP:journals/ndjfl/Pattinson04,DBLP:journals/tcs/Schroder08}
and show that graded
logics~\cite{DBLP:conf/calco/MiliusPS15,DBLP:conf/concur/DorschMS19}
are naturally viewed as a fragment of coalgebraic modal logic.

Syntactically, a logic is a triple
$\mathcal{L} = (\Theta, \mathcal{O}, \Lambda)$ where $\Theta$ is a set
of truth constants,
$\mathcal{O}$ is a set of propositional operators, each with
associated finite arity, and $\Lambda$ is a set of modal operators,
also each with an associated finite arity. The set of formulae of
$\mathcal{L}$ is given by the grammar
\[\phi ::= \theta \mid p(\phi_1, \ldots, \phi_n) \mid \lambda(\phi_1, \ldots, \phi_m)\]
where $p \in \mathcal{O}$ is $n$-ary, $\lambda\in \Lambda$ is $m$-ary and $\theta \in \Theta$.

Semantically, formulae are interpreted in coalgebras of some functor ${G\colon \Ccat \to \Ccat}$, taking values in a truth-value object $\Omega$ of $\Ccat$. We assume that $\Ccat$ has finite products and a terminal object. The semantics of a formula $\phi$ in a $G$-coalgebra $(C, c)$ is a morphism $\sem{\phi}_c\colon C \to \Omega$. The semantics is parametric in the following components:
\begin{itemize}
	\item For each $\theta \in \Theta$ a $\Ccat$-morphism $\hat{\theta}\colon 1 \to \Omega$.
	\item For each $p \in \mathcal{O}$ with arity $n$ a $\Ccat$-morphism $\sem{p}\colon\Omega^n \to \Omega$
    \item For each $\lambda\in \Lambda$ a $\Ccat$-morphism $\sem{\lambda}\colon G(\Omega^n) \to \Omega$ 
\end{itemize}
The semantics of formulae is then defined inductively:
\begin{itemize}
	\item For $\theta \in \Theta$ we define $\sem{\theta}_c = C \xrightarrow{!} 1 \xrightarrow{\hat{\theta}} \Omega$
	\item For $p \in \mathcal{O}$ we define $\sem{p(\phi_1, \ldots, \phi_n)}_c = \sem{p} \cdot \langle \sem{\phi_1}_c, \ldots, \sem{\phi_n}_c\rangle$
    \item For $\lambda \in \Lambda$ we define $\sem{\lambda(\phi_1, \ldots, \phi_m)}_c = \sem{\lambda} \cdot G\langle\sem{\phi_1}_c, \ldots, \sem{\phi_m}_c\rangle \cdot c$ 
\end{itemize}

\noindent The following definition of logical distance quantifies over
all formulae~$\phi$ of \emph{uniform depth}, meaning that all
occurrences of truth constants in~$\phi$ are under the same number of
nested modal operators. This is a mild restriction; in fact, for the
above version of coalgebraic logic, truth constants can always be
modelled as $0$-ary propositional operators, for which there is no
uniformity restriction. Uniform depth does come to play a role once we
talk about graded logics, where propositional operators are
additionally required to be $gM_0$-algebra homomorphisms, while truth
constants are not. If $M_0$ is affine, then all $\Ccat$-morphisms
$1\to A$ into $M_0$-algebras~$A$ are $M_0$-algebra homomorphisms.

For the rest of the paper, assume that $\Ccat$ is one of $\met_\V$,
$\pmet_\V$ or $\dpmet_\V$; in particular, the truth value
object~$\Omega$ carries the structure of a $\V$-category.

\begin{defn}
  The \emph{logical distance} of states $x, y \in C$ in a
  $G$-coalgebra $(C, c)$ under the logic $\mathcal{L}$
  is
  \begin{equation*}\textstyle
    d^{\mathcal{L}}(x, y) = \bigwedge_{n \in \mathbb{N},
	\phi \in \mathcal{L}_n} d_\Omega(\sem{ \phi}_c(x), \sem{ \phi }_c(y)) 
\end{equation*}
where $\mathcal{L}_n$ is the set of all uniform depth-n $\mathcal{L}$ formulae.
	We say that $\mathcal{L}$ is \emph{invariant} for $(\alpha, \mathbb{M})$ if $d^{b} \leq d^{\mathcal{L}}$ and \emph{expressive} if
	$d^{b} \geq d^{\mathcal{L}}$.
\end{defn}
\noindent It is straightforward to show that the logic defined above
is invariant under behavioural equivalence, i.e. the graded
equivalence induced by $\mathbb{M}_G$
(Example~\ref{ex:gradedMonads}.\ref{item:branching}).  We want to
identify logics that are invariant not only under behavioural
equivalence, but under an arbitrary graded semantics. To this end, we
define graded logics:

\begin{defn}
  Let $(\alpha, \mathbb{M})$ be a graded semantics for $G$ and
  $o\colon M_0\Omega \to \Omega$ an $M_0$-algebra structure on
  $\Omega$. A logic $\mathcal{L}$ is a \emph{graded logic} (for
  $(\alpha, \mathbb{M})$) if the following hold:
\begin{enumerate}
\item For every $n$-ary $p\in \mathcal{O}$, the morphism $\sem{p}$ is an
    $M_0$-algebra homomorphism ${(\Omega, o)^n\to(\Omega, o)}$.
\item The semantics of $\lambda \in \Lambda$ factors as
  $\sem{\lambda} = f \cdot \alpha_{\Omega^n}$ such that the tuple
  $(\Omega^n, \Omega, o^{(n)}, o, f)$ constitutes an
  $M_1$-algebra. More concretely, this means that it satisfies
  $f \cdot \mu^{1,0} = f \cdot M_1o^{(n)}$ (we refer to this property
  as \emph{coequalization}), as well as
  $f \cdot \mu^{0,1} = o \cdot M_0f$ (\emph{homomorphy}), or written
  diagrammatically:
	\begin{equation*}
		\begin{tikzcd}
M_1M_0\Omega^n \arrow[r, "M_1o^{(n)}"', shift right] \arrow[r, "{\mu^{1,0}}", shift left] & M_1\Omega^n \arrow[r, "f"] & \Omega
\end{tikzcd}
\qquad
		\begin{tikzcd}
M_0M_1\Omega^n \arrow[r, "M_0f"] \arrow[d, "{\mu^{0,1}}" left] & M_0\Omega \arrow[d, "o"] \\
M_1\Omega^n \arrow[r, "f" below]                                & \Omega                  
\end{tikzcd}
	\end{equation*}
\end{enumerate}
\end{defn}

\noindent In many examples (including those discussed in this work),
the factorization in Condition~2 is simplified by the fact that
$\alpha = id$, and just requires that
$(\Omega^n, \Omega, o^{(n)}, o, \sem{\lambda})$ is an
$M_1$-algebra. For readability, we restrict the technical development
to unary modalities from now on; treating modalities of arbitrary
arity is simply a matter of adding indices. In examples, modalities
will have arity either~$1$ or~$0$.

\begin{propositionrep}
	\label{prop:invariance}
    Let $\mathcal L$ be a graded logic for the semantics $(\alpha, \mathbb M)$ on $G\colon \Ccat \to \Ccat$ and $(C, c)$ a $G$-coalgebra. For two states $x, y \in C$ we have that ${d^b(x, y) \leq d^{\mathcal{L}}(x, y)}$.
\end{propositionrep}
\begin{proof}
	We define an evaluation of formulae on the semantic objects as morphisms $\sem{\phi}_{\mathbb M}\colon M_n1 \to \Omega$, and show that $\sem{\phi}_c = \sem{\phi}_\mathbb{M} \cdot c^{(n)}$. Let $x, y \in C$ be states of a coalgebra $(C, c)$. The claim then follows from the fact that $d^b(x, y) \leq d_{M_n1}(c^{(n)}(x), c^{(n)}(y))$  and the $\sem{\phi}_\mathbb{M}$ are $\V$-functors. We define the semantics $\sem{\cdot}_\mathbb{M}$:

\begin{itemize}
	\item  $\sem{\theta}_\mathbb{M} = M_01 \xrightarrow{M_0\hat{\theta}} M_0\Omega \xrightarrow{o} \Omega$ for $\theta \in \Theta$
	\item $\sem{ p(\phi_1, \ldots, \phi_n)}_\mathbb{M} = \sem{p} \cdot \langle \sem{ \phi_1}_\mathbb{M}, \ldots, \sem{ \phi_n }_\mathbb{M} \rangle$ for $p \in \mathcal{O}$ $n$-ary
	\item $\sem{ \lambda \phi}_\mathbb{M} =
          f(\sem{\phi}_{\mathbb{M}})$ for $\lambda \in \Lambda$ (we
          continue to restrict to unary modalities)
\end{itemize}
\noindent where~$f$ in the clause for modal operators comes from
canonicity of $(M_n, M_{n+1}, \mu^{0,n}, \mu^{0,n+1}, \mu^{1,n})$
(Proposition~\ref{canonical}), that is,
$f(\langle\sem{\phi_1}_\mathbb{M}, \ldots,
\sem{\phi_m}_\mathbb{M}\rangle)$ is the, by freeness unique, morphism
that makes the following square commute:

\begin{equation}\label{eq:mod}
  \begin{tikzcd}
    M_1M_n1 \arrow[r, "M_1\sem{\phi}_{\mathbb{M}}"] \arrow[d, "\mu^{1n}"] &[20pt] M_1(\Omega) \arrow[d, "f"] \\
    M_{n+1}1 \arrow[r, "f(\sem{\phi}_{\mathbb{M}})"] &[20pt] \Omega
  \end{tikzcd}
\end{equation}
\noindent It is straightforward to show by induction on the depth
of~$\phi$ that the morphism $\sem{\phi}_\mathbb{M}$ defines a
homomorphism of $M_0$-algebras from $(M_n1, \mu^{0,n})$ to $(\Omega, o)$,
which is needed for $\sem{\lambda\phi}_\mathbb{M}$ to be defined.

Now fix a coalgebra $(C, c)$ and a uniform-depth formula $\phi$ of
$\mathcal L$. We prove the claim that
$\sem{\phi}_c = \sem{\phi}_\mathbb{M} \cdot c^{(n)}$ by structural
induction on $\phi$.

For the case of $\phi = \theta \in \Theta$ we have, by unrolling
definitions, that $\sem{\theta}_c = \hat{\theta} \cdot !_X$ and
$ \sem{\phi}_\mathbb{M} \cdot c^{(n)} = o\cdot M_0\hat{\theta} \cdot
M_0!_X \cdot \eta_X$, which are the outer paths in the following
diagram:
\begin{equation*}
  \begin{tikzcd}
    X \arrow[r, "!"] \arrow[d, "\eta_X"] & 1 \arrow[d, "\eta_1"] \arrow[r, "\hat \theta"] & \Omega \arrow[d, "\eta_\Omega"] \arrow[rd, "id"] &        \\
    M_0X \arrow[r, "M_0!"]               & M_01 \arrow[r, "M_0 \hat \theta"]              & M_0\Omega \arrow[r, "o"]                         & \Omega
  \end{tikzcd}
\end{equation*}
\noindent The squares commute due to naturality of $\eta$, while
commutativity of the triangle is implied by $o$ being an
$M_0$-algebra.  The step for formulae of the form
$\phi = p(\phi_1, \ldots, \phi_n)$ is immediate from definitions. For
$\phi = \lambda\phi'$ with $\phi'$ of uniform depth~$n$, we have
\begin{equation*}
  \begin{aligned}
      &\sem{\phi}_c = \sem{\lambda} \cdot G\sem{\phi'}_c \cdot c &\\
      &= f \cdot \alpha_\Omega \cdot G\sem{\phi'}_c \cdot c &\\
    &= f \cdot M_1\sem{\phi'}_c \cdot \alpha_X \cdot c  & \by{naturality of $\alpha$}\\
    &= f \cdot M_1\sem{\phi'}_\mathbb{M}\cdot M_1c^{(n)}\cdot\alpha_X \cdot c & \by{IH}\\
    &=f(\sem{\phi'}_\mathbb{M}) \cdot \mu^{1n} \cdot M_1c^{(n)}\cdot\alpha_X \cdot c & \by{\ref{eq:mod}}\\
    &= \sem{\phi}_\mathbb{M} \cdot c^{(n+1)} &
  \end{aligned}
\end{equation*}
\qed
\end{proof}

\begin{proof}[Sketch]
The proof is based on showing, by induction on~$\phi$, the stronger
property that the evaluation functions $\sem{\phi}_c$ of
depth-$n$ formulae~$\phi$ factor through $M_0$-homomorphisms
\begin{equation}\label{eq:old-semantics}
  \sem{\phi}_\mathbb{M}\colon M_n1\to\Omega,
\end{equation}
as used in earlier formulations of the
semantics~\cite{DBLP:conf/concur/DorschMS19,DBLP:conf/lics/FordMS21},
with canonicity of~$M_n1$ (Lemma~\ref{lem:canonical}) being the key
property in the step for modalities.\qed
\end{proof}

\noindent  The proof uses uniformity to enable the factorization of
  formula evaluation via a single $M_n1$, which in general is possible
  only for uniform-depth formulae. In general, non-uniform depth formulae of graded logics fail to be invariant. We provide an example for this fact in the appendix.\lsnote{replace with reference to full version}
  \begin{toappendix}
    \subsubsection*{Details for failure of invariance of non-uniform depth fomulae}
    As a counterexample, consider the Kleisli-style graded monad, i.e. $M_n = TF^n$ where the monad part
    $T = \mathcal{P}$, functor $FX = X \times X$, and the Kleisli
    distributive law $\zeta\colon FT\to TF$ given by
    $\zeta(A, B) = X \times Y$ for $A, B \subseteq X$. These data
    induce a minimal form of \emph{tree-shaped-trace} semantics: For a
    state~$x$ in a $TF$-coalgebra $(C,c)$,
    $c^{(n)}(x)\in TF^n1=\power 1$ records whether the complete binary
    tree of depth~$n$ can be executed at~$x$. We define a
      graded logic $\mathcal{L}$ over $(\Omega, o)$, where $\Omega = \{\bot, \top\}$
    and $o\colon \mathcal{P}\Omega \to \Omega$ takes suprema. The logic
    contains a truth constant $\top$, where
    $\hat{\top} \colon 1\to \Omega$ is the constant map to $\top$. We also
    have a binary modal operator~$\Diamond$, with
    $\sem{\Diamond}\colon \mathcal{P}(\Omega^2 \times \Omega^2) \to
    \Omega$ defined as
    $\sem{\Diamond}(S) = o(\mathcal{P}\pi_1(S)) \wedge
    o(\mathcal{P}\pi_4(S))$ where $\pi_i$ is the $i$-th projection of
    $\Omega^2\times\Omega^2\cong\Omega^4$; that is,
    $\Diamond(\phi_1,\phi_2)$ evaluates to $\top$, if there is a successor
    pair whose first component satisfies~$\phi_1$ and whose second
    component satisfies~$\phi_2$. We define a $TF$-coalgebra
    $(\{x, y, z\}, c)$ where $c(x) = \{(z, z)\}$,
    $c(y) = \{(x, z)\}$, and $c(z) = \emptyset$. Then~$x$
    and~$y$ disagree on the non-uniform formula
    $\Diamond(\Diamond(\top, \top),\top)$, even though
    $x$ and $y$ are equivalent under the graded semantics.
  \end{toappendix}
  Recall that when~$M_0$ is affine,
  then uniform depth is not an actual restriction.
\noindent Having established invariance, we next generalize the
expressivity criterion for graded
logics~\cite{DBLP:conf/concur/DorschMS19} to our present quantitative
setting: 
\begin{defn}
  \label{def:separation}A graded logic~$\mathcal{L}$ consisting of
  $\Theta$, $\mathcal{O}$, $\Lambda$ is \emph{depth-0 separating} if
  the family of maps $\{ o \cdot M_0\hat\theta \colon M_01\to \Omega\mid c \in \Theta\}$
  is initial.  Moreover,~$\mathcal{L}$ is \emph{depth-1 separating} if
  for all canonical $M_1$-algebras $A$ and initial
  sources $\mathfrak{A}$ of $M_0$-homomorphisms $(A_0, a^{0,0}) \rightarrow (\Omega,o)$,
  closed under the propositional operators in~$\mathcal{O}$,
  the set
  $$\Lambda(\mathfrak{A}) := \{\sem{\lambda}(f): A_1 \rightarrow \Omega \mid
    \lambda \in \Lambda, f \in \mathfrak{A} \}$$ is initial, where $\sem{\lambda}(f)$ is the by canonicity unique morphism such that ${\sem{\lambda}(f) \cdot a^{1,0} = \sem{\lambda}\cdot M_1f}$.
\end{defn}
\noindent Essentially, the above conditions encapsulate what is needed
to push initiality through an induction on the depth of formulae. We
thus obtain
\begin{theoremrep}
  \label{thm:main-expressivity-graded-logics}
  Suppose that a graded logic $\mathcal{L}$ is both
  depth-0 separating and depth-1 separating. Then $\mathcal{L}$ is
  expressive.
\end{theoremrep}

\begin{proof}
  We utilize the semantics $\sem{-}_\mathbb{M}$, defined in the proof
  of Proposition~\ref{prop:invariance}.  It suffices to show that the family of
  maps
  \begin{equation*}
	  \{\sem{\phi}_\mathbb{M}: M_n1 \to \Omega \mid \phi \text{ is
      a uniform depth-$n$  $\mathcal L$ formula}\}
  \end{equation*}
  is initial for each~$n$.  We proceed by induction on $n$. The base
  case $n = 0$ is immediate by depth-$0$ separation. For the inductive
  step, let $\mathfrak{A}$ denote the set of evaluations
  $M_n1 \rightarrow \Omega$ of depth-$n$ formulas. By the induction
  hypothesis, $\mathfrak{A}$ is initial.  By
  definition,~$\mathfrak{A}$ is closed under propositional operators
  in $\mathcal{O}$. 
  By depth-$1$ separation, it follows that set
  \begin{equation*}
    \{\llbracket \lambda \rrbracket(\llbracket \phi \rrbracket) \colon M_{n+1} \to \Omega \mid \lambda \in \Lambda, \phi \text{ a uniform depth-$n$ formula}\}
  \end{equation*}
  is initial, proving the claim.
\end{proof}

\section{Graded Semantics via Coalgebraic Determinization }
\label{sec:DetGradedSem}

From now on, fix a  $\Ccat$-endofunctor~$F$, a monad $T$ on $\Ccat$,
and an EM law $\zeta\colon TF\Rightarrow FT$. The objective of this
section is to show that behavioural equivalences, respectively
metrics, on a determinized coalgebra agree with the
equivalences/metrics induced by the graded semantics
(Lemma~\ref{lem:CharTracesFinBehDistance}), and that graded logics for
$FT$ may be reduced to coalgebraic logics for~$F$. We recall the
notion of predeterminization under a graded
semantics~\cite{DBLP:conf/lics/FordMSB022} and show that this is the
same concept as determinization under an EM law, under the
condition that the monad $T$ is affine.


Let $\mathbb{M}$ be a graded monad. We have a functor
$E\colon \Galg{0}{\mathbb{M}} \to \Galg{1}{\mathbb{M}}$ that takes an
$M_0$-algebra~$A$ to the free $M_1$-algebra over~$A$ with respect to
$({-})_0$ (which is then
canonical, cf. Definition~\ref{def:canonical}). This gives rise to a functor
\begin{equation*}
  \mbar= (\Galg{0}{\mathbb{M}} \xrightarrow{E}
  \Galg{1}{\mathbb{M}} \xrightarrow{({-})_1} \Galg{0}{\mathbb{M}}),
\end{equation*}
which intuitively takes an $M_0$-algebra of behaviours to the
$M_0$-algebra of behaviours having absorbed one more step.  Since
$(M_0X, M_1X, \mu^{0,0}_X, \mu^{0,1}_X, \mu^{1,0}_X)$ is canonical
(Proposition~\ref{canonical}), we have
$\mbar(M_0X, \mu^{0,0}) = (M_1X, \mu^{0,1})$, or stated slightly
differently, if we denote the free-forgetful adjunction on
$\Galg{0}{\mathbb{M}}$ by $L \dashv R$, then $M_1 = R \mbar L$.  For a
graded semantics $(\alpha\colon G \to M_1, \mathbb{M})$ and a coalgebra
$c\colon C \to GC$, we have
$C \xrightarrow{\alpha \cdot c} M_1C = R\mbar LC$. The adjunction then
yields a unique morphism $ c^{\dagger}\colon LC \to \mbar LC$,
defining a form of determinization under the graded semantics, similar
to the generalized powerset construction. Specifically, if $M_01 = 1$,
then for $x, y \in C$, $\eta(x)$ and $\eta(y)$ are behaviourally
equivalent in $c^\dagger$ iff~$x$ and~$y$ are identified under the
graded semantics $(\alpha, \mathbb{M})$. We show next that

\begin{lem}
  If $\mathbb M = \mathbb M_\zeta$ then $\mbar = \tilde{F}$.
\end{lem}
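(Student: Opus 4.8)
The plan is to make the functor $\mbar$ explicit as a coequalizer and then recognize that coequalizer as the canonical presentation of the algebra $(A,a)$ transported along $F$. Unfolding the definitions of $\mbar$ and $E$, the $M_1$-algebra $E(A,a)$ is the free (hence canonical) $M_1$-algebra on the $M_0$-algebra $(A,a)$, so by Lemma~\ref{lem:canonical} its top carrier $\mbar(A,a) = (E(A,a))_1$ is the coequalizer, taken in $\Galg{0}{\mathbb M} = \alg T$, of
\[
\mu^{1,0}_A,\; M_1 a \colon M_1 M_0 A \rightrightarrows M_1 A .
\]
Specializing $\mathbb M = \mathbb M_\zeta$, we have $M_0 = T$, $M_1 = FT$, and the multiplication formula $\mu^{m,n} = F^{n+m}\mu\cdot F^m\zeta^{(n)}T$ yields $\mu^{1,0}_A = F\mu_A$ (using $\zeta^{(0)} = \id$), while $M_1 a = FTa$. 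Thus $\mbar(A,a)$ is the coequalizer in $\alg T$ of $F\mu_A, FTa\colon FTTA \rightrightarrows FTA$. I would first record that the three carriers bear their $\mu^{0,\bullet}$-structures and that these two parallel maps, being $\tilde F$ applied to the $\alg T$-morphisms $\mu_A, Ta\colon (TTA,\mu_{TA}) \to (TA, \mu_A)$, are genuine $\alg T$-morphisms between the $\tilde F$-liftings of the free algebras $LTA$ and $LA$.

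Next I would invoke the canonical presentation of a monad algebra: $a\colon (TA, \mu_A)\to (A,a)$ is a coequalizer of $\mu_A, Ta$ in $\alg T$, and this fork is an \emph{absolute} (split) coequalizer in $\Ccat$, witnessed by the sections $\eta_A\colon A\to TA$ and $\eta_{TA}\colon TA\to TTA$ satisfying $a\cdot\eta_A=\id$, $\mu_A\cdot\eta_{TA}=\id$, and $Ta\cdot\eta_{TA} = \eta_A\cdot a$. Applying $F$ preserves split coequalizers, so $Fa\colon FTA\to FA$ is a split coequalizer of $F\mu_A, FTa$ in $\Ccat$. In particular the parallel pair $F\mu_A, FTa$ in $\alg T$ is $R$-split, whence the forgetful functor $R\colon \alg T\to\Ccat$ creates its coequalizer (creation of $R$-split coequalizers, as in Beck's monadicity theorem): the coequalizer in $\alg T$ has carrier $FA$ together with the unique $T$-algebra structure rendering $Fa$ an algebra morphism.

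It then remains to pin that structure down as $Fa\cdot\zeta_A$, i.e.\ as $\tilde F(A,a)$; this is the step I expect to be the crux, although it is short. Since $a\colon (TA,\mu_A)\to(A,a)$ is an $\alg T$-morphism, its $\tilde F$-image $Fa\colon (FTA,\, F\mu_A\cdot\zeta_{TA})\to (FA,\, Fa\cdot\zeta_A)$ is an algebra morphism coequalizing $F\mu_A$ and $FTa$; by the uniqueness from Beck's creation, the created structure must be exactly $Fa\cdot\zeta_A$. Hence $\mbar(A,a) = (FA, Fa\cdot\zeta_A) = \tilde F(A,a)$ on objects. Finally I would check agreement on morphisms: for an $\alg T$-morphism $h\colon (A,a)\to(B,b)$, both $\mbar h$ and $\tilde F h$ are the map induced on the respective coequalizers, which on carriers is $Fh$, and naturality of the split-coequalizer presentation in $(A,a)$ makes this identification functorial, giving $\mbar = \tilde F$. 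The only genuine subtlety throughout is the bookkeeping of the $\mu^{0,\bullet}$- and $\zeta$-structures needed to recognize the abstract coequalizer carrier as the concrete lifting $Fa\cdot\zeta_A$.
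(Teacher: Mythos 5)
Your proof is correct and follows essentially the same route as the paper's: both reduce $\mbar(A,a)$ via Lemma~\ref{lem:canonical} to the split coequalizer $FTTA \rightrightarrows FTA \xrightarrow{Fa} FA$ obtained by applying $F$ to the canonical presentation of $(A,a)$, and then identify the resulting object with $\tilde F(A,a) = (FA, Fa\cdot\zeta_A)$. If anything, your appeal to creation of $R$-split coequalizers by the monadic forgetful functor makes explicit a point the paper leaves implicit (namely that the fork, split only in $\Ccat$, indeed yields the coequalizer in $\alg{T}$), and your use of functoriality of $\tilde F$ together with uniqueness of the created structure replaces the paper's explicit homomorphy diagram chase by the same computation in abstract form.
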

\begin{proof}
	Let $(A, a)$ be a $T$-algebra. Then $\tilde{F}(A, a) = (Fa, Fa \cdot \zeta_A)$. On the other hand, by Lemma~\ref{lem:canonical}, the 1-part of the canonical algebra of $\mbar(A, a)$ is given by the following (split) coequalizer:
\begin{equation*}
\begin{tikzcd}
FTTA \arrow[r, "F\mu_A", shift left] \arrow[r, "FTa"', shift right] & FTA \arrow[r, "Fa"] \arrow[l, "F\eta_{TA}", bend left=60] & FA \arrow[l, "F\eta_A", bend left=49]
\end{tikzcd}
\end{equation*}
Commutativity of all relevant paths is obvious from the algebra and monad axioms, implying that the diagram is a
	coequalizer diagram by virtue of being a split coequalizer. Then $(A, FA, a, Fa \cdot \zeta_A, Fa)$ defines a canonical $M_1$-algebra where coequalization, as well as canonicity (due to Lemma~\ref{lem:canonical}), are by the above coequalizer, and homomorphy instantiates to the outer paths of the following diagram:
\begin{equation*}
\begin{tikzcd}
TFTA \arrow[r, "\zeta_{TA}"] \arrow[d, "TFa"] & FTTA \arrow[r, "F\mu"] \arrow[d, "FTa"] & FTA \arrow[d, "Fa"] \\
TFA \arrow[r, "\zeta_A"]                      & FTA \arrow[r, "Fa"]                     & FA
\end{tikzcd}
\end{equation*}
	Commutativity of the outer rectangle follows from the fact that the left square commutes by naturality of $\zeta$ and the right square commutes by virtue of $(A,a)$ being a $T$-algebra. Taking the 1-part of this canonical algebra then leaves us with $(Fa, Fa \cdot \zeta_A)$.
	On morphisms $h\colon (A, a) \to (B, b)$, the lifting $\tilde{F}$ acts by sending $h$ to $Fh$. Commutativity of the relevant diagram making $Fh$ a $T$-algebra morphism between $FA$ and $FB$ is easily checked, as is the fact that $(h, Fh)$ constitutes a morphism between the canonical $M_1$-algebras.\qed
\end{proof}

\begin{lem}
	Let $(C, c)$ be an $FT$-coalgebra and $c^\dagger$ the predeterminization under the graded semantics $M_\zeta$. Then $c^\# = c^\dagger$.
\end{lem}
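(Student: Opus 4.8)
The plan is to observe that both $c^\#$ and $c^\dagger$ arise as the adjoint transpose of the very same morphism under the very same adjunction, so that they coincide by uniqueness of transposition; the only real work is to line up the two constructions so that this identity becomes visible.

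First I would unwind the definition of $c^\dagger$. By construction it is the unique $\Galg{0}{\mathbb{M}_\zeta}$-morphism obtained from $\alpha\cdot c\colon C\to M_1 C = R\mbar LC$ by transposing along the free-forgetful adjunction $L\dashv R$ on $\Galg{0}{\mathbb{M}_\zeta}$. I would then identify this adjunction with the one used for EM determinization: since $M_n = F^nT$, the degree-$0$ monad $(M_0,\eta,\mu^{0,0})$ is precisely $T$ (one checks $\mu^{0,0}=\mu$ directly from $\mu^{m,n}=F^{n+m}\mu\cdot F^m\zeta^{(n)}T$ using $\zeta^{(0)}=\id$), so $\Galg{0}{\mathbb{M}_\zeta}=\alg{T}$ and its free-forgetful adjunction is exactly the $L\dashv R$ appearing in the definition of $c^\#$.

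Next I would remove the remaining two discrepancies. On the one hand, $\alpha=\id$ for $\mathbb{M}_\zeta$ (Example~\ref{ex:gradedMonads}.\ref{item:emgm}), so $\alpha\cdot c=c$; on the other hand, the preceding lemma gives $\mbar=\tilde F$, whence $R\mbar LC = R\tilde F LC = FTC$ and, more importantly, the codomain object $\mbar LC$ of $c^\dagger$ is literally the $\tilde F$-coalgebra carrier $\tilde F LC$ serving as codomain of $c^\#$. Thus both $c^\#$ and $c^\dagger$ are the transpose of $c\colon C\to FTC$ along $L\dashv R$ with the same target $\tilde F LC=\mbar LC$; since the excerpt records that $c^\#$ is exactly this transpose, uniqueness of adjoint transposes forces $c^\#=c^\dagger$.

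I do not expect a genuine obstacle here: once $\mbar=\tilde F$ is in hand from the previous lemma, the statement is essentially a bookkeeping exercise in matching the two transposition setups. The one point requiring care is to confirm that all three ingredients of the graded-semantics determinization—the monad $M_0$, the adjunction $L\dashv R$, and the step endofunctor $\mbar$—specialize to their EM counterparts $T$, the free-$T$-algebra adjunction, and $\tilde F$ respectively, and that $\alpha$ contributes nothing because it is the identity; after that, the equality is compelled by the universal property of the transpose.
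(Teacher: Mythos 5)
Your proposal is correct and follows essentially the same route as the paper's own (very terse) proof: the paper likewise observes that $c^\#$ is the adjoint transpose of $c$ under the free-forgetful adjunction of $\alg{T}$ (citing the determinization literature) and concludes that $c^\#$ and $c^\dagger$ agree by definition. Your additional bookkeeping---checking $M_0=T$ with $\mu^{0,0}=\mu$, $\alpha=\id$, and $\mbar=\tilde F$ so that the two transposition setups literally coincide---is exactly what the paper's ``by definition'' leaves implicit.
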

\begin{proof}
	This follows from the fact that $c^\#$ can equivalently be defined as the adjoint transpose of $c$ under the free-forgetful adjunction of $\alg{T}$ \cite{DBLP:conf/fsttcs/SilvaBBR10}. Then $c^\#$ and $c^\dagger$ agree by definition.\qed
\end{proof}

\begin{defn}
	Let $T\colon \Ccat \to \Ccat$ be a monad and $H\colon \alg{T}\to \alg{T}$ a functor on the corresponding Eilenberg-Moore category. Further, let $c\colon ((A, d_A),a) \to H((A, d_A), a)$ be an $H$-coalgebra. The \emph{finite-depth behavioural distance} of two states $x, y \in A$ is given by $d^{H}(x, y) = \bigwedge_{i\in \nat} d_{H^i1}(f_i(x), f_i(y))$, where the $f_i\colon A\to H^i 1$ are the projections into the final $H$-chain.\todo{P: find a better notation for this, maybe $d^H_\omega$?}
\end{defn}

\begin{lem}\label{lem:CharTracesFinBehDistance}
  Let $(\alpha\colon G \to M_1, \mathbb{M})$ be a graded semantics on
  $\Ccat$ with~$M_0$ affine, and let $(C, c)$ be a
  $G$-coalgebra. Then $d^{\mbar}(\eta(x), \eta(y)) = d^b(x, y)$ for
  all $x, y \in C$.
\end{lem}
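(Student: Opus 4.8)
The goal is to show that the finite-depth behavioural distance of $\eta(x),\eta(y)$ in the $\mbar$-coalgebra $c^\dagger$ coincides with the graded behavioural distance $d^b(x,y)$ defined via the $n$-step behaviour maps $c^{(n)}\colon C\to M_n1$. The plan is to relate the two families of maps that compute these distances: the projections $f_i\colon RLC\to R\mbar^i 1$ into the final $\mbar$-chain (computed in $\alg{M_0}$ and then forgotten to $\Ccat$), and the $n$-step maps $c^{(n)}$ into $M_n1$. Since $M_0$ is affine we have $M_01\cong 1$, so the final object of $\alg{M_0}$ is the carrier $1$, and consequently $R\mbar^i 1 = M_i 1$ by the identity $M_1=R\mbar L$ together with $M_{i}1 = R\mbar^{i}L1$ (using $L1=(M_01,\mu^{0,0})\cong$ the free algebra on $1$, and $M_01\cong 1$). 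This is the crucial structural input, and I expect it to be the main technical hurdle.

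\smallskip
\noindent\textbf{Key steps.}
\begin{enumerate}
\item First I would unfold the definition of the final $\mbar$-chain for the coalgebra $c^\dagger\colon LC\to\mbar LC$ and identify $\mbar^i$ applied to the terminal algebra. Using $M_1=R\mbar L$ and the canonicity of $(M_0X,M_1X,\dots)$ from Proposition~\ref{canonical}, I would establish by induction on $i$ that $R\mbar^i L \cong M_i$ as functors on $\Ccat$ (or at least that $R\mbar^i 1\cong M_i1$, exploiting $M_01\cong 1$ so that $1\in\alg{M_0}$ is terminal and $\mbar^0 1 = 1$, $\mbar L = $ the algebra with carrier $M_1$, etc.).

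\item Next I would verify that, under this identification, the projection $f_i\colon RLC\to R\mbar^i 1$ into the final $\mbar$-chain equals (the carrier map of) the $n$-step behaviour $c^{(i)}\colon C\to M_i1$ precomposed appropriately. Concretely, the adjoint-transpose definition of $c^\dagger$ means $R c^\dagger\cdot\eta_C=\alpha\cdot c$, so chasing the inductive definition of the canonical cone $f_{i+1}=R\mbar f_i\cdot R c^\dagger$ against the recursive definition of $c^{(i+1)}=\mu^{1,i}\cdot M_1 c^{(i)}\cdot\alpha\cdot c$ should yield $f_i\cdot\eta_C = c^{(i)}$ by induction, the inductive step using naturality of the multiplication and the concrete action of $\mbar$ on morphisms.

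\item Finally, with $f_i\cdot\eta_C = c^{(i)}$ in hand, the distance computation is immediate: since $\eta_C$ is a $\V$-functor (it is a morphism in $\Ccat=\dpmet_\V$ etc.) and the $\V$-category structure on $R\mbar^i 1$ matches that on $M_i1$, we get
\[
d^{\mbar}(\eta(x),\eta(y)) = \bigwedge_{i\in\nat} d_{R\mbar^i 1}(f_i\eta(x),f_i\eta(y)) = \bigwedge_{i\in\nat} d_{M_i1}(c^{(i)}(x),c^{(i)}(y)) = d^b(x,y).
\]
\end{enumerate}

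\smallskip
\noindent\textbf{Main obstacle.}
The delicate part is Step~1: establishing $R\mbar^i 1\cong M_i1$ together with the matching of $\V$-category structures, and doing so coherently enough that the projection maps are genuinely identified with the $c^{(i)}$ rather than merely abstractly isomorphic. The affineness of $M_0$ is essential here precisely to collapse $M_01$ to the terminal object $1$, which is what makes $\mbar^i 1$ reconstruct $M_i1$; without it, the final $\mbar$-chain would sit over $M_0 1\neq 1$ and the two distances would differ (mirroring the $F^n!$-erasure phenomenon discussed after Example~\ref{ex:gradedMonads}). I would therefore treat the affineness hypothesis with care and make sure the iterated application of $\mbar$ respects it at every level, appealing to the canonicity result (Proposition~\ref{canonical}) to justify $R\mbar^i L = M_i$.
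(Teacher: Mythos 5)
Your proposal is correct and follows essentially the same route as the paper: the paper's (one-line) proof is precisely your Step~2, an induction establishing $f_n\cdot\eta = c^{(n)}$ with affinity of $M_0$ entering only in the base case $n=0$, and your Step~1 identification $R\mbar^i L1 \cong M_i1$ via the canonicity of $(M_nX, M_{n+1}X, \mu^{0,n}, \mu^{0,n+1}, \mu^{1,n})$ (Proposition~\ref{canonical}) is exactly the implicit bookkeeping behind it. One small simplification over your worry in the ``main obstacle'' paragraph: affinity is needed only once, to make $L1$ coincide with the terminal $M_0$-algebra at the start of the final chain; thereafter canonicity alone propagates the identification $\mbar(M_n1,\mu^{0,n})=(M_{n+1}1,\mu^{0,n+1})$, with no level-by-level appeal to affinity required.
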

\begin{proof}
  One shows by induction on~$n$ that $f_n \cdot \eta = c^{(n)}$ for
  all~$n$, where affinity is needed for the base case $n=0$.\qed
\end{proof}

\begin{rem}
  In the case where the graded monad is $\mathbb{M}_\zeta$, if $T$ is
  affine, then the final chain of $\mbar$ lives over the final
  chain of $F$. In particular, if $F$ is finitary and $\V=2$, then
  finite-depth behavioural equivalence agrees with behavioural
  equivalence, for both $F$-coalgebras and $\mbar$-coalgebras.
\end{rem}

\begin{rem}
  As noted in Example~\ref{ex:gradedMonads}.\ref{item:emgm}, finite-depth
  behavioural distance in $\alg{T}$ may be coarser than the graded
  semantics but may then be canonically recovered from the graded semantics. 
\end{rem}

\noindent From now on, we notationally conflate modalities $\lambda\in\Lambda$
and their interpretations $\sem{\lambda}\colon FT\Omega\to\Omega$.
The following result completely characterizes the modal operators of
graded logics for the semantics $(\id, \mathbb{M}_\zeta)$:

\begin{theoremrep}
  \label{thm:validOperators}
  Let $\lambda\colon FT\Omega \to \Omega$ be a modal operator for a
  graded logic with truth value object $(\Omega, o)$. Then
  $\lambda = ev_\lambda \cdot Fo$ for some algebra homomorphism
  $ev_\lambda \colon\tilde{F}(\Omega, o) \to (\Omega, o)$. On the
  other hand, every algebra homomorphism
  $\tilde{F}(\Omega, o) \to (\Omega, o)$ yields a modal operator in
  this way.
\end{theoremrep}
\begin{proof}
  Since $\lambda$ is an $M_1$-algebra structure and thus satisfies the
  coequalization property, it factors through the coequalizer of
  $\mu^{1,0}_{\Omega}\colon M_1M_0\Omega \to M_1\Omega$ and
  $M_1o\colon M_1M_0\Omega \to M_1\Omega $, which, by definition, is
  given by $\mbar{(\Omega, o)} = (F\Omega, Fo\cdot\zeta_{\Omega})$, as
  displayed in the following diagram:
  \begin{equation*}
    \begin{tikzcd}
      FTT\Omega \arrow[r, "F\mu_{\Omega}", shift left] \arrow[r, "FTo"', shift right] & FT\Omega \arrow[rd, "Fo"] \arrow[rr, " \lambda"] &                           & \Omega \\
      & & F\Omega \arrow[ru, "ev_\lambda"] &
\end{tikzcd}
\end{equation*}

\noindent To show that $ev_\lambda$ is a homomorphism of $T$-algebras consider the following diagram:\jfnote{This part is not necessary if we can argue that the coequalizer is taken in Eilenberg-Moore}

\begin{equation*}
\begin{tikzcd}
FT\Omega \arrow[rd, "Fo"] \arrow[rr, " \lambda "]                                    &                                                            & \Omega                    \\
																								   & F\Omega \arrow[ru, "ev_\lambda"]                                  &                           \\
FTT\Omega \arrow[uu, "F\mu"] \arrow[r, "FTo"]                                                      & FT\Omega \arrow[u, "Fo"']                          &                           \\
																								   & TF\Omega \arrow[rd, "Tev_\lambda"] \arrow[u, "\zeta_{\Omega}"'] &                           \\
TFT\Omega \arrow[rr, "T\lambda "] \arrow[ru, "TFo"] \arrow[uu, "\zeta_{T\Omega}"] &                                                            & T\Omega \arrow[uuuu, "o"]
\end{tikzcd}
\end{equation*}
The outer square commutes by homomorphy of $\lambda$. The top
triangle commutes by coequalization, as does the bottom triangle,
since it is just $T$ applied to the top triangle. The left top square
commutes since $o$ is a $T$-algebra structure and the left bottom
square commutes by naturality of $\zeta$. It follows that the right
hand square precomposed with $TFo$ commutes. $TFo$ is a split
coequalizer, and therefore an epimorphism. Therefore by canceling
$TFo$ we have that the right hand square commutes, which is precisely
the condition for $ev_\lambda$ to be a homomorphism of $T$-algebras.

Conversely, given a morphism $ev_\lambda \colon \tilde F(\Omega, o) \to (\Omega, o)$ it is straight forward to check that $\lambda =ev_\lambda \cdot Fo$ satisfies the laws necessary to make it an $M_1$-algebra main structure.\qed
\end{proof}
\noindent As our second main result, we next show that a logic is
depth-1 separating for the semantics of $\mathbb{M}_\zeta$ if the
$F$-algebra part of its modal operators is expressive for $F$. This
criterion is typically very easy to establish and can be shown for
general classes of functors, which is what we mean by our slogan that
expressive graded logics for EM semantics come essentially for free.

\begin{thm}
  \label{thm:main_F_separation}
  Let $\mathcal{L} = (\Theta, \mathcal{O}, \Lambda)$ be a graded logic
  for $\mathbb{M}_\zeta$ and
  $\mathcal{L}' = (\Theta, \mathcal{O}, \Lambda')$ the (graded) logic
  for $\mathbb{M}_F$ with
  $\Lambda'=\{f \colon F\Omega \to \Omega\mid f \cdot Fo \in
  \Lambda\}$. Then~$\mathcal{L}$ is depth-1 separating for
  $\mathbb{M}_\zeta$ if~$\mathcal{L}'$ is depth-1 separating for
  $\mathbb{M}_F$.
\end{thm}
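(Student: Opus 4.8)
The plan is to transport the depth-1 separation condition for $\mathcal{L}'$ over $\mathbb{M}_F$ directly to the condition for $\mathcal{L}$ over $\mathbb{M}_\zeta$, by showing that on the relevant canonical algebras the two families of maps whose initiality is at stake literally coincide. First I would record what canonical $M_1$-algebras look like on each side. For $\mathbb{M}_\zeta$ we have $M_0=T$ and $M_1=FT$, and by the identification $\mbar=\tilde F$ established above, any canonical $M_1$-algebra $A$ is, up to isomorphism, determined by its $0$-part $T$-algebra $(A_0,a^{0,0})$ as $A_1=FA_0$ with $a^{1,0}=Fa^{0,0}$. Since isomorphisms in $\Ccat$ are isometries and initiality refers only to the $\V$-category structure, I may assume $A=\tilde F(A_0,a^{0,0})$. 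For $\mathbb{M}_F$ we have $M_0=\Id$ and $M_1=F$, so $M_0$-algebras are just objects of $\Ccat$, and a canonical $M_1$-algebra over an object $B_0$ is (again up to iso) given by $B_1=FB_0$ with $b^{1,0}=\id$.

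Now fix a canonical $M_1$-algebra $A$ for $\mathbb{M}_\zeta$ and an initial source $\mathfrak{A}$ of $T$-homomorphisms $(A_0,a^{0,0})\to(\Omega,o)$ closed under~$\mathcal{O}$; the goal is initiality of $\Lambda(\mathfrak{A})$. I would view the underlying $\V$-functors of $\mathfrak{A}$ as a source $A_0\to\Omega$: since $T$-homomorphisms are $\V$-functors, since initiality is insensitive to the algebra structure, and since the closure condition under $\mathcal{O}$ is a condition on the underlying maps, $\mathfrak{A}$ is an initial source of $M_0$-homomorphisms for $\mathbb{M}_F$, closed under $\mathcal{O}$, over the object $A_0$. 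Applying depth-1 separation of $\mathcal{L}'$ to the canonical $\mathbb{M}_F$-algebra $(A_0,FA_0,\id)$ then shows that, using $b^{1,0}=\id$ and hence $\sem{f'}(f)=f'\cdot Ff$, the set $\Lambda'(\mathfrak{A})=\{f'\cdot Ff\colon FA_0\to\Omega \mid f'\in\Lambda',\, f\in\mathfrak{A}\}$ is initial.

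The crux is to identify $\Lambda(\mathfrak{A})$ with $\Lambda'(\mathfrak{A})$. By Theorem~\ref{thm:validOperators} every $\lambda\in\Lambda$ factors as $\lambda=ev_\lambda\cdot Fo$ with $ev_\lambda\in\Lambda'$, and since $Fo$ is a split epimorphism this assignment $\lambda\mapsto ev_\lambda$ is a bijection $\Lambda\to\Lambda'$ with inverse $f'\mapsto f'\cdot Fo$. It remains to compute $\sem{\lambda}(f)$ on the $\mathbb{M}_\zeta$ side, and here I claim $\sem{\lambda}(f)=ev_\lambda\cdot Ff$: precomposing with $a^{1,0}=Fa^{0,0}$ and using that $f$ is a $T$-homomorphism ($f\cdot a^{0,0}=o\cdot Tf$) gives $ev_\lambda\cdot Ff\cdot Fa^{0,0}=ev_\lambda\cdot Fo\cdot FTf=\lambda\cdot M_1f$, which is exactly the defining equation $\sem{\lambda}(f)\cdot a^{1,0}=\lambda\cdot M_1f$; uniqueness from canonicity of $A$ then forces $\sem{\lambda}(f)=ev_\lambda\cdot Ff$. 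Combining this with the bijection $\lambda\leftrightarrow ev_\lambda$ yields $\Lambda(\mathfrak{A})=\{ev_\lambda\cdot Ff\}=\{f'\cdot Ff\}=\Lambda'(\mathfrak{A})$ as sources out of $FA_0$, so initiality of the latter gives initiality of the former, establishing depth-1 separation of $\mathcal{L}$.

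The main obstacle lies in this last step: ensuring that the factorization $\lambda=ev_\lambda\cdot Fo$ is canonical via the split epi $Fo$, so that $\Lambda\leftrightarrow\Lambda'$ is a genuine bijection, and that the by-canonicity-unique map $\sem{\lambda}(f)$ really equals $ev_\lambda\cdot Ff$, for which the $T$-homomorphism property of the elements of $\mathfrak{A}$ is precisely what is needed. Everything else is transport along the isomorphisms identifying the two kinds of canonical algebras and the routine observation that $T$-homomorphisms are in particular $\V$-functors.
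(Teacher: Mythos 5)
Your proposal is correct and takes essentially the same route as the paper's own proof: you use $\mbar=\tilde F$ to normalize canonical $M_1$-algebras to the form $(A_0,FA_0,a^{0,0},Fa^{0,0}\cdot\zeta,Fa^{0,0})$, compute $\sem{\lambda}(f)=ev_\lambda\cdot Ff$ from the homomorphism square of $f$ together with uniqueness by canonicity (the paper instead cancels the split epimorphism $Fa^{0,0}$, which is the same computation), and conclude $\Lambda(\mathfrak{A})=\Lambda'(\mathfrak{A})$ so that initiality transfers from the depth-1 separation of $\mathcal{L}'$. Your extra bookkeeping---the explicit bijection $\lambda\leftrightarrow ev_\lambda$ via Theorem~\ref{thm:validOperators} and the up-to-isomorphism identification of canonical algebras on the $\mathbb{M}_F$ side---makes steps explicit that the paper leaves implicit, but it is not a different method.
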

\begin{proof}
  Let $A$ be a canonical $M_1$-algebra. Since $\mbar = \tilde{F}$, we
  know that $A$ has the form
  $(A_0, FA_0, a^{0,0}, Fa^{0,0}\cdot \zeta, Fa^{0,0})$. For a
  homomorphism $h\colon (A_0, a^{0,0}) \to (\Omega, o)$ of $T$-algebras
  and $\lambda \in \Lambda$ where $\lambda = f \cdot Fo$, $\lambda(h)$
  is, by definition, the unique morphism that makes the outer
  rectangle in the following diagram commute:
  \begin{equation*}
    \begin{tikzcd}
      FTA_0 \arrow[r, "FTh"] \arrow[dd, "Fa^{0,0}"'] & FT\Omega \arrow[d, "Fo"] \\
      & F\Omega \arrow[d, "f"]      \\
      FA_0 \arrow[r, "\lambda(h)"', dashed] \arrow[ru, "Fh"]    & \Omega
    \end{tikzcd}
  \end{equation*}
  \noindent The top square commutes since it is just $F$ applied to
  the homomorphism square of $h$. Since~$a^{0,0}$ is a split epimorphism
  (by virtue of being an algebra for a monad), $Fa^{0,0}$ is also a
  split epimorphism. Therefore, $\lambda(h) = f \cdot Fh$. Let
    ${\mathfrak{A} \subseteq \alg{T}((A_0, a^{0,0}), (\Omega, o))}$ be a
  separating set of algebra homomorphisms; we have to show that
  $\Lambda(\mathfrak{A})$ is separating. But since
  $\lambda(h) = f \cdot Fh$ for all $h\in\mathfrak{A}$ and
  $\lambda=f\cdot Fo\in\Lambda$, we have
  $\Lambda'(\mathfrak{A}) = \{f \cdot Fh \mid f\in \Lambda', h \in
  \mathfrak{A}\}=\Lambda(\mathfrak{A})$, and $\Lambda'(\mathfrak{A})$
  is spearating by depth-$1$ separation for $\mathcal{L}'$.\qed
\end{proof}

\section{Examples}
\label{sec:examples}

In our central examples,~$F$ takes the form $\V \times (-)^\Sigma$
while $T$ varies. In these cases, we always have a set of separating
modalities: We have the set
${\Lambda' = \{ \ev{ \sigma } \mid \sigma \in \Sigma \} \cup \{
    \ev{\top} \}}$ of modalities for~$F$, where
$\ev{\sigma}\colon \V \times \V^\Sigma \to \V$ is a unary operator
defined by $(v, f) \mapsto f(\sigma)$, and
$\ev{\top}\colon \V \times 1^\Sigma \to \V$ is the $0$-ary operator
defined by $(v, f) \mapsto v$. For a monad $T$ and an algebra
structure $o\colon T\V \to \V$, the semantics of each
$\ev{\lambda} \in \Lambda'$ extends to a modal operator
$\modal{\lambda}$ for $FT$, given by
$\modal{\lambda} = \ev{\lambda}\cdot Fo$. We thus have coalgebraic
logics $\mathcal{L}' = (\emptyset, \emptyset, \Lambda')$ for $F$ and
$\mathcal{L} = (\emptyset, \emptyset, \Lambda)$ for $FT$.

\begin{lem}
  Let $F = \V \times {-}^\Sigma$. Let $T$ be a monad and
  $\tilde{F} \colon \alg{T} \to \alg{T}$ a lifting of~$F$. Moreover,
  suppose that $\V$ carries a $T$-algebra structure
  $o\colon T\V \to \V$. Then, for every $\ev{\lambda}\in \Lambda'$,
  the semantics $\ev{\lambda}$ is a homomorphism of algebras
  $\tilde F(\V,o) \to (\V, o)$
\end{lem}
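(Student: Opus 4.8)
The plan is to verify the defining square of an algebra homomorphism directly, after first isolating the only nontrivial input, namely the shape of the EM law. Write $\zeta\colon TF\Rightarrow FT$ for the EM law corresponding to the lifting $\tilde F$, so that $\tilde F(\V,o)=(F\V,\,Fo\cdot\zeta_\V)$ with carrier $F\V=\V\times\V^\Sigma$ and $Fo=\id_\V\times o^\Sigma$. For the unary $\ev{\sigma}$ the claim amounts to the equation $\ev{\sigma}\cdot Fo\cdot\zeta_\V = o\cdot T\ev{\sigma}$, and for the nullary $\ev{\top}$ to the analogous equation for the homomorphism $\tilde F((\V,o)^0)=\tilde F(1)\to(\V,o)$, whose source carrier is $F1\cong\V$ with structure $F!\cdot\zeta_1$. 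The conceptual observation driving the proof is that both maps extend to natural transformations: $\ev{\sigma}=\pi_\sigma\cdot\pi_2$ is the $\V$-component of a natural transformation $F\Rightarrow\Id$ (evaluate the $\Sigma$-indexed part at $\sigma$), while $\ev{\top}=\pi_1$ is a component of a natural transformation $F\Rightarrow K_\V$ into the constant functor. This lets me reduce each homomorphism condition to exactly one of the two component equations of $\zeta$ recorded in Example~\ref{ex:distributive-machine}.

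First I would treat $\ev{\top}$. Since $F$ is product-shaped and $F!=\id_\V\times !^\Sigma$ acts as the identity on the first factor, precomposing $\ev{\top}=\pi_1$ with $F!$ leaves $\pi_1$ unchanged, so the left-hand side collapses to $\pi_1\cdot\zeta_1$; by the first-component equation $\pi_1\cdot\zeta=o\cdot T\pi_1$ of the EM law this equals $o\cdot T\pi_1=o\cdot T\ev{\top}$, which is the right-hand side. For $\ev{\sigma}$ I would first compute $\ev{\sigma}\cdot Fo=\pi_\sigma\cdot\pi_2\cdot(\id_\V\times o^\Sigma)=o\cdot\pi_\sigma\cdot\pi_2$, using that evaluation at $\sigma$ commutes with the power map, $\pi_\sigma\cdot o^\Sigma=o\cdot\pi_\sigma$. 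The left-hand side thus becomes $o\cdot\pi_\sigma\cdot\pi_2\cdot\zeta_\V$, and it remains to match this against $o\cdot T(\pi_\sigma\cdot\pi_2)$.

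The main obstacle, and the only step beyond the bare product structure, is therefore the identity $\pi_\sigma\cdot\pi_2\cdot\zeta_\V=T(\pi_\sigma\cdot\pi_2)$, i.e.\ that reading off the second component of $\zeta$ at the letter~$\sigma$ recovers $T$ applied to evaluation at $\sigma$. This is precisely the defining property of the natural transformation $\delta$ underlying the EM law of Example~\ref{ex:distributive-machine}, namely $\pi_2\cdot\zeta=\delta\cdot T\pi_2$ together with $\pi_\sigma\cdot\delta=T\pi_\sigma$; substituting it turns the left-hand side into $o\cdot T\pi_\sigma\cdot T\pi_2=o\cdot T\ev{\sigma}$, closing the case. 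I expect both computations to be cleanest when organized as a single diagram chase whose outer rectangle is the homomorphism square, with naturality of the relevant transformation filling one triangle and the appropriate component equation of $\zeta$ filling the other; once the reduction to the two components of $\zeta$ is in place, nothing beyond this bookkeeping is required.
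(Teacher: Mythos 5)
Your proof is correct, but it proceeds along a genuinely different route from the paper's, which is a one-liner: since the $\ev{\lambda}$ are just product projections and the forgetful functor $\alg{T}\to\Ccat$ creates limits, $\tilde F(\V,o)$ is the product algebra $(\V,o)\times(\V,o)^\Sigma$ in $\alg{T}$ (second factor the $\Sigma$-power of $(\V,o)$), so its projections are automatically $T$-algebra homomorphisms. You instead unfold the EM law into its component equations $\pi_1\cdot\zeta=o\cdot T\pi_1$ and $\pi_2\cdot\zeta=\delta\cdot T\pi_2$ (with $\pi_\sigma\cdot\delta=T\pi_\sigma$) and chase the two homomorphism squares directly; in effect your computation verifies by hand that $Fo\cdot\zeta_\V$ (resp.\ $F!\cdot\zeta_1$ in the nullary case) coincides componentwise with the product algebra structure, which is exactly the fact the paper extracts abstractly from limit creation. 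The abstract argument is shorter and applies verbatim to any modalities arising as limit projections, without ever inspecting $\zeta$; your chase is more elementary and has the merit of exposing precisely which properties of $\zeta$ are used --- and thereby a subtlety that both proofs share. The component equations you invoke are those of the \emph{specific} law of Example~\ref{ex:distributive-machine} (with $a=o$), whereas the lemma's statement literally allows an arbitrary lifting. For a general EM law over $\V\times(-)^\Sigma$, the unit and multiplication axioms only force the first component to have the form $a'\cdot T\pi_1$ for \emph{some} $T$-algebra structure $a'$ on $\V$, not necessarily $a'=o$ (e.g.\ for $T=\power$, $\V=2$, build $\zeta$ from conjunction but take $o$ to be disjunction); then $\ev{\top}$ is a homomorphism into $(\V,a')$ rather than $(\V,o)$ and the claim fails. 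The paper's proof makes the same silent assumption --- for limit creation to apply, the lifting must send $(\V,o)$ to the product algebra, which holds for the canonical law, and the ensuing corollary indeed fixes $\zeta$ as in Example~\ref{ex:distributive-machine} --- so your reading matches the intended one, and it is a strength of your write-up that this dependence on the concrete $\zeta$ is visible rather than hidden.
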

\begin{proof}
  Since the $\ev{\lambda}$ are just product
  projections, this follows from the fact that the forgetful functor
  $U \colon \alg{T} \to \textbf{C}$ creates limits~\cite[Proposition
	20.12]{AdamekHerrlich90}.\qed
\end{proof}
\begin{cor}
	Let $\zeta$ be defined as in Example~\ref{ex:distributive-machine}.
	The logic $\mathcal{L}$ as defined above is a graded logic for the graded semantics $(\id, \mathbb{M}_\zeta)$.
\end{cor}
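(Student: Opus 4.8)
The plan is to check the two defining conditions of a graded logic for the semantics $(\id, \mathbb{M}_\zeta)$ directly against the data of $\mathcal{L} = (\emptyset, \emptyset, \Lambda)$. Since $\mathcal{L}$ carries neither truth constants nor propositional operators, Condition~1 (requiring each $\sem{p}$ to be an $M_0$-algebra homomorphism) holds vacuously, and there is nothing to verify for truth constants either. Hence the entire content of the statement lies in Condition~2, which concerns the modalities in $\Lambda$. Note that $\V$ does carry the $T$-algebra structure $o\colon T\V\to\V$ required for these modalities, this being exactly the datum fixed in Example~\ref{ex:distributive-machine} with $A=\V$.

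Because we work with the semantics $(\id, \mathbb{M}_\zeta)$, the factorization demanded in Condition~2 is trivial ($\alpha = \id$), so the condition reduces to showing that for each modality $\modal{\lambda} = \ev{\lambda} \cdot Fo$ (with $\ev{\lambda} \in \Lambda'$) the tuple $(\Omega, \Omega, o, o, \modal{\lambda})$ is an $M_1$-algebra, i.e.\ that $\modal{\lambda}$ satisfies coequalization and homomorphy with respect to $(\Omega, o) = (\V, o)$. This is precisely what it means for $\modal{\lambda}$ to be a ``modal operator for a graded logic with truth value object $(\Omega, o)$'' in the sense of Theorem~\ref{thm:validOperators}. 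I therefore invoke the converse direction of that theorem, which states that precomposition with $Fo$ turns any $T$-algebra homomorphism $\tilde F(\V, o) \to (\V, o)$ into such a modal operator. The preceding lemma supplies exactly the required homomorphisms: each $\ev{\lambda} \in \Lambda'$, being a product projection, is an algebra homomorphism $\tilde F(\V, o) \to (\V, o)$ since the forgetful functor creates limits. As $\modal{\lambda}$ is by construction $\ev{\lambda} \cdot Fo$, the theorem yields immediately that $(\Omega, \Omega, o, o, \modal{\lambda})$ is an $M_1$-algebra.

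The nullary operator $\ev{\top}$ is treated in the same way after the routine adjustment of indices (replacing $\Omega$ by $\Omega^0 = 1$ and $o$ by the terminal algebra structure $o^{(0)}$), as anticipated by the remark that modalities of arbitrary finite arity require only additional bookkeeping. Assembling these observations establishes both conditions, so $\mathcal{L}$ is a graded logic for $(\id, \mathbb{M}_\zeta)$. I do not expect a genuine obstacle here: the substantive work has already been carried out in Theorem~\ref{thm:validOperators} (characterizing valid modal operators as algebra homomorphisms composed with $Fo$) and in the preceding lemma (verifying that the projections $\ev{\lambda}$ are such homomorphisms); the corollary is simply their conjunction, together with the vacuity of the propositional and truth-constant clauses.
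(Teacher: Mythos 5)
Your proposal is correct and takes essentially the same route the paper intends: the corollary is stated without separate proof precisely because it is the conjunction of the preceding lemma (each projection $\ev{\lambda}$ is a $T$-algebra homomorphism $\tilde F(\V,o)\to(\V,o)$, via creation of limits by the forgetful functor) with the converse direction of Theorem~\ref{thm:validOperators}, which is exactly how you argue. Your bookkeeping remarks — vacuity of the conditions on $\Theta$ and $\mathcal{O}$ since both are empty, the simplification of Condition~2 under $\alpha=\id$, and the index adjustment for the nullary $\ev{\top}$ — all match the paper's stated conventions.
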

\begin{lem}
	\label{lem:L_separating_for_machines}
	The logic $\mathcal{L}'$ as defined above is depth-1 separating for the graded semantics $(\id, \mathbb{M}_F)$.
\end{lem}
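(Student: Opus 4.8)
The plan is to unwind the definition of depth-$1$ separation (Definition~\ref{def:separation}) for the trivial graded monad $\mathbb{M}_F$, where $M_0=\Id$ and $M_1=F$ with identity unit and multiplications, and then to reduce the required initiality of $\Lambda'(\mathfrak{A})$ to the assumed initiality of $\mathfrak{A}$, applied componentwise over $\Sigma$.

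First I would pin down the shape of a canonical $M_1$-algebra for $\mathbb{M}_F$. Since $M_0=\Id$, the $M_0$-algebras are just objects of $\Ccat$ and every $\Ccat$-morphism is an $M_0$-homomorphism; moreover the unit law forces $a^{0,0}=\id$. Hence the coequalizer diagram of Lemma~\ref{lem:canonical} degenerates to the coequalizer of the identical pair $\id,\id\colon FA_0\rightrightarrows FA_0$, so canonicity makes $a^{1,0}\colon FA_0\to A_1$ an isomorphism. Up to this iso we may take $A_1=FA_0$ with $a^{1,0}=\id$, so that for any $M_0$-homomorphism $f\colon A_0\to\Omega$ and $\ev{\lambda}\in\Lambda'$ the induced map is simply $\sem{\ev{\lambda}}(f)=\sem{\ev{\lambda}}\cdot Ff$ (the defining equation $\sem{\ev{\lambda}}(f)\cdot a^{1,0}=\sem{\ev{\lambda}}\cdot M_1 f$ collapses because $a^{1,0}=\id$).

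Next I would compute these induced maps explicitly on $FA_0=\V\times A_0^\Sigma$. For the unary $\ev{\sigma}$ one obtains $\sem{\ev{\sigma}}(f)\colon(v,g)\mapsto f(g(\sigma))$ (since $Ff$ acts as $\id_\V\times f^\Sigma$), and for the $0$-ary $\ev{\top}$ one obtains the first projection $\sem{\ev{\top}}\colon(v,g)\mapsto v$ (as $F!_{A_0}$ collapses the $A_0^\Sigma$-component to the point). Recalling that products in $\Ccat$ carry the pointwise-meet $\V$-metric, the distance on $FA_0$ is $d_{FA_0}((v,g),(v',g'))=d_\V(v,v')\wedge\bigwedge_{\sigma\in\Sigma}d_{A_0}(g(\sigma),g'(\sigma))$.

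The final step computes the meet of the distances witnessed by $\Lambda'(\mathfrak{A})$: the operator $\ev{\top}$ contributes $d_\V(v,v')$, while each $\ev{\sigma}$ with $f\in\mathfrak{A}$ contributes $d_\V(f(g(\sigma)),f(g'(\sigma)))$. Because $\mathfrak{A}$ is initial, for each fixed $\sigma$ the meet $\bigwedge_{f\in\mathfrak{A}}d_\V(f(g(\sigma)),f(g'(\sigma)))$ collapses to $d_{A_0}(g(\sigma),g'(\sigma))$; reassembling the overall meet then reproduces exactly $d_{FA_0}$ as displayed above, establishing initiality of $\Lambda'(\mathfrak{A})$ and hence depth-$1$ separation. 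The only care needed is the bookkeeping for the $0$-ary modality (checking that $\ev{\top}$ really recovers the $\V$-component) together with recalling the pointwise-meet description of the product metric; once these are in place the argument is a direct componentwise invocation of the initiality hypothesis, so I anticipate no genuine obstacle beyond this routine verification.
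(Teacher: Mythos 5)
Your proof is correct and takes essentially the same route as the paper's: you identify canonical $M_1$-algebras for $\mathbb{M}_F$ as (up to iso) $(A_0, FA_0, \id, \id, \id)$ — your derivation via the degenerate coequalizer of Lemma~\ref{lem:canonical} cleanly justifies the form the paper cites from Proposition~\ref{canonical} — then reduce the induced maps to $\sem{\ev{\lambda}}\cdot Ff$ and conclude initiality of $\Lambda'(\mathfrak{A})$. The only difference is presentational: where the paper argues abstractly that the projections form an initial source, that $F$ preserves initial sources by virtue of being a product, and that composites of initial sources are initial, you verify the same fact by the explicit pointwise-meet computation of $d_{FA_0}$, which is a direct unfolding of that argument (and your handling of the $0$-ary $\ev{\top}$ via $F!_{A_0}$ matches the paper's treatment).
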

\begin{proof}
  By Proposition~\ref{canonical}, canonical $M_1$-algebras have the form
  $A = (A_0, FA_0, \id, \id, \id)$. Let $A$ be a canonical
  $M_1$-algebra and $\mathfrak{A}$ an initial source $A_0 \to \V$; we
  then need to show that the lower edges in the following diagram
  collectively form an initial source, where $f$ ranges over
  $\mathfrak{A}$:
  \begin{equation*}
    \begin{tikzcd}
      FA_0 \arrow[d, "\id"'] \arrow[r, "Ff"] & F\mathcal{V} \arrow[d, "\ev \lambda"] \\
      FA_0 \arrow[r, "\ev \lambda(f)", dashed]    & \mathcal{V}
    \end{tikzcd}
  \end{equation*}
  Since the modal operators $\ev \lambda$ are precisely the
  projections of the product $F\V$, they constitute an initial source;
  moreover, again since~$F$ is a product, it preserves initial
  sources, so the source of all $Ff$ is initial. Thus,
  $\Lambda(\mathfrak{A})$ is a composite of initial sources, hence
  itself
  initial. 
  \qed
\end{proof}

\noindent Words in $\Sigma^*$ can be viewed as formulae of
$\mathcal{L}$ in the obvious way. The evaluation $\sem{\phi}_c$ then
captures the notion of acceptance in the automaton given by the
$FT$-coalgebra $(C, c)$. Logics in general however allow to express
far more interesting statements, since on the one hand formulae may
specify words only up to a suffix, and on the other hand the logic may
include propositional operators.  We consider a few concrete examples:

\begin{expl}[Deadlock-free nondeterministic automata]
  \label{ex:nd-automata}
    We take ${\V = \textbf{2}}$, and work in the category
  $\met_\textbf{2}$ of sets and functions. Concretely, this means that
  all objects carry the discrete equivalence relation, and initiality
  of a source is joint injectivity. If $T$ is the nonempty powerset monad
  $\mathcal{P}^+$, then coalgebras
  $c\colon C \to 2 \times (\mathcal{P}^+C)^\Sigma$ are deadlock-free
  nondeterministic automata.  With the algebra structure
  $o\colon \mathcal{P}^+2 \to 2$ defined by $o(X) = \top$ if
  $\top \in X$ and $o(X) = \bot$ otherwise, we can construct a
  distributive law $\zeta$ as in Example~\ref{ex:distributive-machine}.
  Since, by Lemma~\ref{lem:L_separating_for_machines}, $\mathcal{L}$ is
  depth-1 separating for $2 \times {-}^\Sigma$, we have that by
  Theorem~\ref{thm:main_F_separation} the logic $\mathcal{L}$ is expressive
  for $(id, \mathbb{M}_\zeta)$. We can add disjunction as a
  propositional operator, preserving invariance of the logic, since
  disjunction preserves joins (i.e.\ is a homomorphism of
  $\mathcal{P}^+$-algebras).
\end{expl}

\begin{expl}[Reactive probabilistic automata]
  For $\V = [0,1]_\oplus$, we consider reactive probabilistic
  automata. Let $T$ be the (finitely supported) probability
  distribution monad $\dist$ on $\pmet_{[0,1]_\oplus}$, which equips
  the set of distributions with the Kantorovich metric
  (e.g.~\cite{DBLP:conf/fsttcs/BaldanBKK14}). We put $\Omega = [0,1]$,
  equipped with the symmetrized metric $d(x,y)=|x-y|$. We have an
  algebra $o\colon \dist[0,1] \to [0,1]$ taking expected values:
  $o(\mu) = \sum_{v\in [0,1]} v\mu(v)$.  The construction in
  Example~\ref{ex:distributive-machine} then yields a semantics where,
  intuitively, the first component of $F$ determines the probability
  of a state to accept. Upon reading a letter~$a$, the automaton moves
  to a random successor state according to the probability
  distribution on states associated with~$a$. The evaluation
  $\sem{\phi}_c(x)$ is then the expected probability of the state
  $x \in C$ of an automaton $(C, c)$ accepting the word corresponding
  to $\phi$.  The distance of two states $x, y \in C$ is the supremum
  in difference of acceptance across all words in $\Sigma^*$.  Again
  we have expressivity of $\mathcal{L}$ by combining
  Lemma~\ref{lem:L_separating_for_machines} and
  Theorem~\ref{thm:main_F_separation}. The logic remains invariant w.r.t.\
  the semantics when extended with propositional operators that are
  homomorphisms $[0,1]^n\to[0,1]$ of $\dist$-algebras, which in this
  case means they are affine maps, such as convex combinations or
  fuzzy negation $x\mapsto 1-x$.

\end{expl}

\begin{expl}[Reactive probabilistic automata with black hole
  termination] Going beyond the leading example
	$F=\V\times(-)^\Sigma$, we add explicit failure in the vein of \cite{DBLP:journals/lmcs/SokolovaW18} to reactive
  probabilistic automata: We now take $\V=2$, and again view
  $\met_\textbf{2}$ as the category of sets and functions
  (Example~\ref{ex:nd-automata}). Let $\Omega = [0,1]$, equipped with the
  $\dist$-algebra structure $o\colon \dist[0,1] \to [0,1]$ that takes
  expected values. We obtain a distributive law
  $\dist(2 \times {-}+1)^\Sigma \Rightarrow 2 \times
    ((\dist{-})+1)^\Sigma$ by composing the distributive law from
  Example~\ref{ex:distributive-machine} with the law
  $\lambda\colon \dist(- + 1) \Rightarrow (\dist-) + 1$ that maps
  $\mu \in \dist(X+1)$ to $*$ iff $\mu(*) \not = 0$, and to $\mu$
  otherwise, where~$*$ denotes the unique element of~$1$. The
  semantics for this type of automaton is like that of probabilistic
  automata, with the exception that if a run leads to the ``state''
  $*$ with non-zero probability, then the automaton immediately gets
  stuck and rejects the word. For the logic, we consider the same
  operators as in the previous examples, with the modification that
  $\ev{\sigma}(v, f) = \bot$ if $f(\sigma) = *$.  Additionally we
  introduce the modal operator $\ev{\bar{\sigma}}$, which carries the
  semantics $\ev{\bar{\sigma}} (v, f) = \top$ if $f(\sigma) = *$ and
  $\ev{\bar{\sigma}} (v, f) = \bot$ otherwise. It is straightforward
  to check that these operations define $\dist$-algebra homomorphisms,
  making them valid modalities according to Theorem~\ref{thm:validOperators}.

  To verify expressivity, it is sufficient by
  Theorem~\ref{thm:main_F_separation} to prove separation of elements of
  $2 \times (X+1)^\Sigma$, so let $\mathfrak{A}$ be an initial set of
  morphisms of type $X \to 2$. Given $s, t \in 2 \times (X+1)^\Sigma$
  such that $d(s, t) = \bot$, we need to find $\ev\lambda$ and
  $h\in \mathfrak{A}$ (or just $\ev\lambda$ if~$\ev\lambda$ is
  $0$-ary) such that $\ev \lambda(h)(s) \not = \ev \lambda(h)(t)$. If
  $s = (v, f)$ and $t = (w, g)$ differ in their first component
  $v\not = w$, we can choose $\ev \top$. If the elements differ in one
  of the other components $\sigma$, we distinguish cases: If
  $x = f(\sigma) \not = * \not = g(\sigma) = y$, then there is
  $h\in \mathfrak{A}$ separating $x$ from $y$, thus
  $\ev \sigma (h)(x) \not = \ev \sigma(h)(y)$.  Otherwise, if
  $f(\sigma) = * \not = g(\sigma)$, we can choose $\ev{ \bar{\sigma}}$
  to separate~$s$ and~$t$, and similarly for the symmetric case. We
  thus obtain expressiveness in the two-valued sense, i.e.\ the logic
  distinguishes non-equivalent states. Like in the previous example,
  the logic remains invariant when extended with propositional
  operators that are affine maps $[0,1]^n\to[0,1]$.
\end{expl}

\section{Conclusion}
\label{sec:conclusion}

We have discussed characteristic logics for system semantics arising
via determinization in the coalgebraic powerset construction,
so-called Eilenberg-Moore semantics, which relies on a distributive
law of a functor representing the language type of a system over a
monad representing the branching
type~\cite{DBLP:conf/fsttcs/SilvaBBR10}. Leading examples are
languages semantics for various forms of automata. As our main
technical tool, we have exploited that Eilenberg-Moore semantics may
be cast as an instance of graded semantics, which provides generic
mechanisms for designing invariant modal logics and establishing their
expressiveness. Our first main result establishes an overview of all
graded modalities available for Eilenberg-Moore semantics, showing
that these are canoincally obtained from modalities for the language
type and a single modality for the branching type. Our second main
result shows that expressivity of such a logic follows from
branching-time expressivity of the same collection of operators with
respect to the language type. Our results are stated in quantalic
generality, allowing for instantiation to both two-valued and
quantitative types of semantics and logics.

An important next step in the programme of developing graded logics
into a verification framework is the question of how graded semantics
relates to fixpoint logics. While we have focused on Eilenberg-Moore
semantics in the present work, graded semantics does also subsume
Kleisli-style trace semantics~\cite{HasuoEA07}, which poses additional
challenges for the design of characteristic modal logics, in particular
in the quantitative setting.

\bibliographystyle{splncs04}
\bibliography{references}

\appendix

\end{document}